\renewcommand{\ni}{\noindent}
\newcommand{\eq}{\begin{equation}}
\newcommand{\qe}{\end{equation}}
\newcommand{\N}{\mathbb{N}}                
\newcommand{\R}{\mathbb{R}}                     
\newcommand{\C}{\mathcal{C}}            
\def\a{\mathbf{a}}
\def\C{\mathscr{C}}
\def\M{\mathscr{R}}
\def\A{\mathbf{A}}
\def\b{\mathbf{b}}
\def\g{\mathbf{g}}
\def\s{\mathbf{s}}
\def\u{\mathbf{u}}
\def\v{\mathbf{v}}
\newcommand{\normTV}[1]{\left\lVert #1\right\lVert_{TV}}
\newcommand{\abs}[1]{\left\vert#1\right\vert}
\newenvironment{proof}{\noindent {\bf Proof:}}{\hfill $\Box$}
\newtheorem{lemma}{Lemma}
\newtheorem{theorem}{Theorem}
\title{\bf Exact solutions to Super Resolution on semi-algebraic domains in higher dimensions}
\begin{document}

\author{Y. de Castro$^1$, F. Gamboa$^2$, D. Henrion$^{3,4,5}$, J.-B. Lasserre$^{3,4}$}

\footnotetext[1]{D\'epartement de Math\'ematiques (CNRS UMR 8628), B\^atiment 425, Facult\'e des Sciences d'Orsay, Universit\'e Paris-Sud 11, F-91405 Orsay Cedex, France.}
\footnotetext[2]{Institut de Math\'ematiques de Toulouse (CNRS UMR 5219). Universit\'e Paul Sabatier, 118 route de Narbonne, 31062 Toulouse, France.}
\footnotetext[3]{CNRS; LAAS; 7 avenue du colonel Roche, F-31400 Toulouse; France.}
\footnotetext[4]{Universit\'e de Toulouse; LAAS; F-31400 Toulouse; France.}
\footnotetext[5]{Faculty of Electrical Engineering, Czech Technical University in Prague,
Technick\'a 2, CZ-16626 Prague, Czech Republic.}

\date{Draft of \today}

\maketitle

\begin{abstract}
We investigate the multi-dimensional Super Resolution problem on closed semi-algebraic domains for various sampling schemes such as Fourier or moments. We present a new {semidefinite programming} (SDP) formulation of the $\ell_1$-minimization in the space of Radon measures in the multi-dimensional frame on semi-algebraic sets.

While standard approaches have focused on SDP relaxations of the dual program (a popular approach is based on  Gram matrix representations), this paper introduces an exact formulation of the primal $\ell_{1}$-minimization exact recovery problem of Super Resolution that unleashes standard techniques (such as {moment-sum-of-squares} hierarchies) to overcome intrinsic limitations of previous works in the literature. Notably, we show that one can exactly solve the Super Resolution problem in dimension greater than $2$ and for a large family of domains described by semi-algebraic sets.
\end{abstract}

\begin{center}\small
{\bf Keywords:} super resolution; signed measure; semidefinite programming; total
variation; semialgebraic domain.
\end{center}

\maketitle 


\section{Introduction}

\subsection{Super Resolution}
The early formulation of the Super Resolution problem can be identified as the ability of faithfully reconstruct a high-dimensional sparse vector from the observation of a low-pass filter. This situation models important applications in imaging spectroscopy \cite{harris1994super}, image processing \cite{park2003super}, radar imaging \cite{odendaal1994two}, or astronomy \cite{makovoz2005point}. As a theoretical baseline, suppose that one wants to reconstruct a vector $x^\star$ by solving a system of linear equations:
\begin{equation}
\mathbf Ax=\b\quad \mbox{ where }\ x\in\R^{N}\,,\ \b:=\mathbf Ax^\star\,,\ \b\in\R^m\ \mbox{ and }\ m\ll N\,.
\label{lili} 
\end{equation}
If the number $s$ of non-vanishing components of $x^\star$ is small and the matrix $\mathbf A$ enjoys some geometric property, namely the Null Space Property \cite{MR2449058}, that depends only on its kernel and the sparsity $s$, then one can exactly reconstruct $x^\star$ by minimizing  the $\ell_1$-norm within the affine subspace of all solutions of the linear system. Conditions on $m$, $N$, $s$ and properties have been extensively studied, see for example \cite{Dono06a}, \cite{MR2449058} and reference therein. Less than ten year ago, Super Resolution has seeded the ideas of compressed sensing theory \cite{Donoieee06},  \cite{CandTao06}, \cite{CandTao07}.  In this theory,  the matrix $\mathbf A$ is randomized and one is interested both in the construction of probability distributions allowing to show relevant properties, such as the Restricted Isometry Property \cite{CandTao06}, and the stability of the reconstruction process.  This area of research is very fruitful and leads to many practical applications in signal and image processing, see for example \cite{herman09}, \cite{tian07}, or \cite{chen08}. To the best of our knowledge, the first mathematical works on Super Resolution are due to Donoho et al in the early ninety, see \cite{donoho89} and \cite{donoho1992}. In these papers, the term {\it Super Resolution} appeared because the matrix $\mathbf A$ is related to a discretization of some Fourier transform. As a matter of fact, when inverting a discrete Fourier transform, the separation of two close spikes in a sparse signal is made possible by minimizing the $\ell_1$-norm while a linear inversion method is not able to do so. Beside, in these years, many applied researchers were performing Fourier inversion of non negative sparse signals using some {\it entropy} regularization, see for example \cite{Skil87} and \cite{burg67}. At this time, the respective roles of sparsity and non negativity in regard of spikes separation from non linear Fourier inversion methods was not completely clear. An important step for understanding these roles has been taken by lifting the linear equation \eqref{lili} up to the more abstract measure set up, see \cite{Gas90}, \cite{GaGa96} and \cite{DouGa96}:  
\begin{equation}
\langle a_i,\mu\rangle= \int_{X} a_i(x)\mu(dx)=b_i\quad \mbox{ where } \mu\in\M(X)_+\ \mbox{and\ }i=1,\ldots,m\,.
\label{lili1} 
\end{equation}
Here, $(a_i(x))$ is a vector of continuous function defined on $X$, a given compact subset of $\R^n$, $\b=(b_i)\in\mathbb R^m$ and $\M(X)_+$ is the set of all nonnegative {Radon} measures on $X$. In this frame, there exists very special points $\b^\star$ such that the set of all members of  $\M(X)_+$ satisfying (\ref{lili1}) for $\b=\b^\star$ reduces to a singleton $\{\mu_{\b^\star}\}$. Furthermore, $\mu_{\b^\star}$ is a discrete measure concentrated on very few points.  Hence, if one deals with a $\b$ close to a point $\b^\star$ the set of all solutions of \eqref{lili1} (or \eqref{lili} with positivity constraint and a design matrix $A$ discretizing the vectorial function $(a_i(x))_{i=1}^{m}$) is very small. So that, two different methods for selecting a member of this set will lead to similar solutions. We refer to \cite{Ana92}, \cite{GaGa96} and \cite{DouGa96} where quantitative evaluations on the size of the set of solutions is performed and to \cite{GaGa94} and \cite{Lew96} for related evaluations in another context. Notice that the structure of points $\b^\star$ can be completely described in the case where the family of functions $(a_i)$ is a Chebyshev system, $T$-system for short, this includes the case of discrete Fourier transform and moments, see \cite{borwein1995polynomials} or \cite{Stud66} for an exhaustive overview on these systems of functions. A more involved situation is when \eqref{lili1} does not enjoy the non negativity assumption on the measure. This means that one wishes to solve the linear equation:   
\begin{equation}
\langle a_i,\mu\rangle= \int_{X} a_i(x)\mu(dx)=b_i\quad \mbox{ where } \mu\in\M(X)\ \mbox{and\ }i=1,\ldots,m\,.
\label{lili2} 
\end{equation}
Here, $\M(X)$ is the set of {signed Radon} measures on $X$. This is the frame of the present paper.  Surprisingly, as shown by authors of this paper \cite{de2012exact} and under some assumptions on the family of functions $(a_i)$, there exists pairs $(\b^\star, \mu_{\b^\star})\in\R^m\times \M(X)$ such that $b^\star_i=\langle a_i,\mu_{\b^\star}\rangle$, for $i=1,\cdots,m$, and $\mu_{\b^\star}$ is the unique solution of \eqref{lili2} minimizing the total variation norm with $\b=\b^\star$. Such $\mu_{\b^\star}$ are sparse in the sense that they are measures with finite support. 

The study of the solutions to \eqref{lili} and \eqref{lili2} uncovers that $\ell_{1}$-minimization faithfully reconstruct objects concentrated on very few points. However, the analysis in Super Resolution differs dramatically from Compressed Sensing. For instance, {it is well known that sparse $\ell_{1}$-minimization cannot be successful in the ultrahigh-dimensional setting \cite{verzelen2012minimax} where $N\gg \exp(m)$. Indeed it was shown {in} \cite{CGLP} that one needs at least $m\geq (\mathrm{cst}) s\log(N/s)$ measures to faithfully uncover $s$ sparse vectors by $\ell_{1}$-minimization}. Hence, the analysis of Compressed Sensing in terms of high-dimensional random geometry \cite{CGLP} cannot be extended to the space of measure. {Moreover, observe that Compressed Sensing aims at recovering a sparse signal from random projection while, in Super Resolution, the sampling scheme is deterministic}. 

Admittedly their analysis differ but we can bridge the gap between \eqref{lili} and \eqref{lili2} by considering their dual formulations. From the point of view of convex analysis, we see that the dual form of these linear programs aims at reconstructing a dual certificate \cite{CandTao06,de2012exact}, i.e. an $\ell_{\infty}$-constraint linear combination of $(a_{i})$, where $(a_{i})$ are the lines of $\bf A$ in \eqref{lili} and a family of continuous functions in \eqref{lili2}. As pointed out by authors of this paper \cite{de2012exact}, a parallel between Compressed Sensing and Super Resolution exists where the lines of $\bf A$ are the evaluation of a vector of continuous function at some prescribed points. In this frame, Super Resolution can be seen as a Compressed Sensing problem where the dimension $N$ goes to infinity. This analogy persists with the notion of dual certificate, i.e. a solution to the following dual program \eqref{dudu}. Indeed, the dual programs given by the constraints \eqref{lili} and \eqref{lili2} (while minimizing the $\ell_{1}$-norm and the total variation norm respectively) share the same expression:
\eq
\label{dudu}
\sup_{\mathbf u\in\R^{m}}\b^{\top} \u\quad\mathrm{s.t.}\quad\lVert \a^{\top}\u\lVert_{\infty}\leq1\,,
\qe
where $\a=\mathbf A$ in Compressed Sensing \eqref{lili} and  $\a(x)=(a_i(x))_{i=1}^{m}$ is a vector of continuous function in Super Resolution \eqref{lili2}.  Define a \textit{dual certificate} $\mathbf P$ as:
\eq
\label{cece}
\mathbf P=\a^{\top}\u^{\star}\,,
\qe
where $\u^{\star}$ is a solution to the dual program \eqref{dudu}. From the duality properties, note that $\mathbf P$ is a sub-gradient of the $\ell_{1}$-norm at a solution to the primal program \eqref{lili2}. Hence, we can ensure that a target measure $\mu^{\star}$ is a solution to \eqref{lili2} if we are able to construct a dual polynomial \eqref{cece} that interpolates the phases of the weights of $\mu^{\star}$ at its support points. 

From a theoretical point of view, one of the main issue in Super Resolution consists in exhibiting such a {\it dual certificate} $\mathbf P$. In the Fourier frame, notice that an important construction, for target discrete measures whose support satisfy a {\it separable condition}, is given in the fundamental paper \cite{candes2012towards} where a huge step has been taken. Indeed, the authors are the first to give a sharp condition on the support points of the target measure in order to warrant the existence of a dual certificate. Moreover, their proof is based on interpolating, by a Jackson kernel, the phases of the weights of the target measure at its support points and, henceforth, explicitly construct a dual certificate \eqref{cece}. In the Compressed Sensing frame, observe that the same method has been investigated {in} \cite{Kaka11} using a Dirichlet kernel. In the present paper, we will not deal with this issue but rather with the practical resolution of the convex program \eqref{dudu}.

In the Super Resolution frame, remark that the program \eqref{dudu} has finitely many variables but infinitely many constraints.  This last point can be a severe limitation in practice. As a matter of fact, a difficult task is to construct a tractable program that deals with the $\ell_{\infty}$-constraint of \eqref{dudu}. Standard formulations \cite{candes2012towards} are based on Gram matrix representations, see below. Incidentally, these procedures cannot be extended to dimension greater than 2 or to semi-algebraic domains $X$. To cope with this issue, we consider a new parametrization of the primal program based on works of the authors \cite{lasserre}. Note that our method relies on infinitely many parameters but relaxations involving only a finite number of parameters are proved to lead to the exact solution of the primal program. 

\subsection{Previous works} During the last years, theoretical guarantees for exact recovery \cite{bredies2010inverse,de2012exact,candes2012towards}, bounds on the support recovery from inaccurate samplings \cite{azais2013spike,fernandez2013support}, prediction of the Fourrier coefficient from noisy observations \cite{tang2013near}, and noise robustness \cite{duval2013exact} have been showed. These works prove that discrete measures can be recovered, in a robust manner, from few samples using an $\ell_{1}$-method. 

{From a numerical point of view, a solution to $\ell_{1}$-minimization is often computed using the dual program described by \eqref{dudu}. Then, the $\ell_{\infty}$-norm constraint of the dual program \eqref{dudu} is equivalently formulated as a nonnegative constraint on (trigonometric) polynomials. This point of view unleashes Gram matrix representations \cite{candes2012towards} or Toeplitz matrix representations \cite{tang2013near} to handle the constraint of non-negativity of (trigonometric) polynomials on domains. However, these formulations are limited to the frame of the real line and the torus in dimension one (see \cite{dumitrescu2007positive} for instance) since they rely on the Fej\'er-Riesz theorem.
As a matter of fact, the literature of Super Resolution has been focused on the fact that a semidefinite programming (SDP) formulation for the dual problem can be given as long as there exists a spectral factorization for a globally nonnegative trigonometric polynomial \cite{candes2012towards} (Bounded Real Lemma using Fej\'er-Riesz theorem) or a spectral decomposition for semi-definite Toeplitz matrices \cite{tang2013near} (Caratheodory-Toeplitz theorem). Hence, except on the real line and the torus in dimension one, there is no exact SDP formulation of the Super Resolution problem for the dual form. However, relaxed SDP versions of the dual form in dimension greater than $2$ are discussed in \cite{Xual} and they have been used on the $2$-sphere in \cite{bendory2013exact,bendory2014super}.
}

\subsection{Contribution} 
To the best of our knowledge, the present paper is the first to overcome this limitation and expand the scope of Super Resolution implementation to the {\it multi-dimensional} frame {in general basic {\it semi-algebraic domains}}. Indeed, we operate a smart method to tackle numerically the solution of equation (\ref{lili2}) with minimal $\ell_{1}$-norm. This method uses both a re-parametrization in terms of  moment sequences \cite{lasserre} and the so-called {sum-of-squares (SOS}) decompositions of nonnegative multidimensional polynomials, {used widely in systems control theory during the last decade,} see for example \cite{henrion2005positive}. {Notice that, in the scope of Super Resolution, this technique is new and, contrary to other approaches, focuses on the primal program through a truncation of the moment sequences.}

More precisely, given the real numbers $b_i,\;i=1,\ldots,m$, consider the infinite-dimensional optimization problem:
\begin{equation}\label{tv}
\begin{array}{ll}
\inf & \|\mu\|_{TV} \\
\mathrm{s.t.} & \langle a_i, \mu \rangle = b_i, \quad i=1,\ldots,m\\
&\mu\in\M(X),
\end{array}
\end{equation}
where $\|\,.\,\|_{TV}$ is the total variation norm of measures (to be defined later). Notice that, {under
{standard} assumptions, problem} (\ref{tv}) is feasible. That is 
\begin{equation}
\label{a}
 \exists\mu \in \M(X) \mbox{ such that for } i=1,\ldots,m\,,\quad \langle a_i,\mu \rangle = b_i.
\end{equation}
\ni
Our main contribution concerns the numerical resolution of the total variation minimization problem \eqref{tv}. {We extend the univariate $(n=1)$ trigonometric SDP formulation of \cite{candes2012towards} to a
much more general SDP formulation in dimension $n\geq 2$, for measures supported on basic semialgebraic sets.}

To this end, we use the Jordan decomposition of the signed measure $\mu=\mu_+-\mu_-$
as a difference of two nonnegative measures  supported on $X$ {and we follow \cite{lasserre} to define a hierarchy of finite-dimensional primal-dual SDP problems:
\begin{itemize}
\item the primal problems
correspond to SDP relaxations of the conditions that must satisfy finitely many moments of the two nonnegative measures on $X$;
\item the dual  problems correspond to SDP strengthenings using SOS multipliers of the conditions that two
distinguished polynomials are nonnegative on $X$.
\end{itemize}
The moment-SOS hierarchy is indexed by an integer $k$, called relaxation order,
which is the (half of the) number of moments
used to represent the measures in the primal problem, or equivalently, the (half of the) degree of the SOS representations
of the polynomials in the dual problem. The larger is the relaxation order $k$, the larger is the size of the SDP problems,
the number of variables and constraints growing polynomially in $O(k^n)$.

The primal SDP problem features the matrices
of moments of the two nonnegative measures. If the rank of each moment matrix,  as a function of $k$, stabilizes
to a certain constant value, then the corresponding measure is atomic, with the number of atoms equal to the rank.
Therefore, the total variation minimization problem \eqref{tv} has been solved succesfully, and this is certified by the
polynomials solving the SDP problem. Numerical linear algebra can then be used to retrieve the support
of the optimal measure.}

 In the sequel, we present some examples for which our method is the first to give an SDP formulation of the Super Resolution phenomena. As a matter of fact, our procedure encompasses a larger class of measurements than the class of standard moments discussed previously. Our numerical {experiments are carried out} with the {Matlab} interface GloptiPoly 3 which is designed to generate semidefinite relaxations of measure LP problems with polynomial data. So we assume that the functions $a_i(x)$ in LP problem (\ref{p}) are multivariate polynomials, and for notational simplicity, we let $a_i(x):=x^{\alpha_i}=x_1^{\alpha_i^1}\cdots x_n^{\alpha_i^n}$ where $\alpha_i \in {\mathbb N}^n$ are given. Note that the choice of monomials is only motivated for notational simplicity, and that other choices of polynomials (e.g. Chebyshev polynomials) are typically preferable numerically\footnote{A numerical analysis of the impact of the basis is however out of the scope of our work.}. SDP relaxations are then solved with SeDuMi {or MOSEK}, implementations of a primal-dual interior-point algorithm.
{For reproducibility purposes, our Matlab codes (using the public-domain interface GloptiPoly and the SDP solver SeDuMi) of the numerical examples presented next are available for download at}
\begin{center}
{\href{http://homepages.laas.fr/henrion/software/tvsdp.tar.gz}{\tt homepages.laas.fr/henrion/software/tvsdp.tar.gz}}
\end{center}

\subsubsection{Disconnected domain}

\begin{figure}[h!]
\begin{center}
\includegraphics[width=0.8\textwidth]{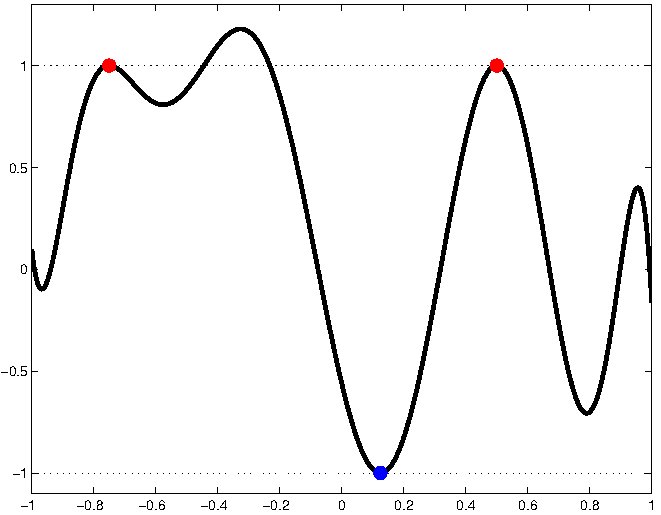}
\end{center}
\caption{Degree 9 polynomial certificate for the univariate example, {with 2 points (red) in the
support of the positive part, and 1 point (blue) in the support of the negative part of the optimal measure.}\label{figex1}}
\end{figure}
We want to recover the measure:
\[
\mu := \delta_{-3/4}+\delta_{1/2}-\delta_{1/8}
\]
on the disconnected set $X:=[-1,-1/2]\cup[0,1]$ which can be modeled as the polynomial superlevel set $X=\{x \in {\mathbb R} \: :\: g_1(x)\geq 0\}$ for the choice:
\[
g_1(x):=-(x+1)(x+1/2)x(x-1).
\]
In LP (\ref{p}) we let $a_i(x):=x^i$ and $b _i:=(-3/4)^i+(1/2)^i-(1/8)^i$ for $i=0,1,2\ldots,9$. 
Solving the SDP relaxation of order {$k=5$} on our standard PC takes 0.2 seconds, and optimality is certified
{from the solution of the primal moment problem} with a rank 2 moment matrix {for $\mu_+$}
and a rank 1 moment matrix {for $\mu_-$}, from which the 3 points can be extracted {using numerical
linear algebra}.
On Figure \ref{figex1} we represent the degree 9 polynomial $\sum_{i=0}^9 u_i x^i$ certifying optimality, {constructed from the solution of the dual SOS problem}. Indeed we can check that the polynomial attains the value $+1$ at the points $x=-3/4$, and $x=1/2$ (in red), it attains the value $-1$ at the point $x=1/4$ (in blue), while taking values between $-1$ and $+1$ on $X$. Notice in particular that the polynomial is larger than $+1$ around $x=-1/4$, but this point is not in $X$. 

\subsubsection{Low-pass filters in dimension greater than $3$}

\begin{figure}[h!]
\begin{center}
\includegraphics[width=0.8\textwidth]{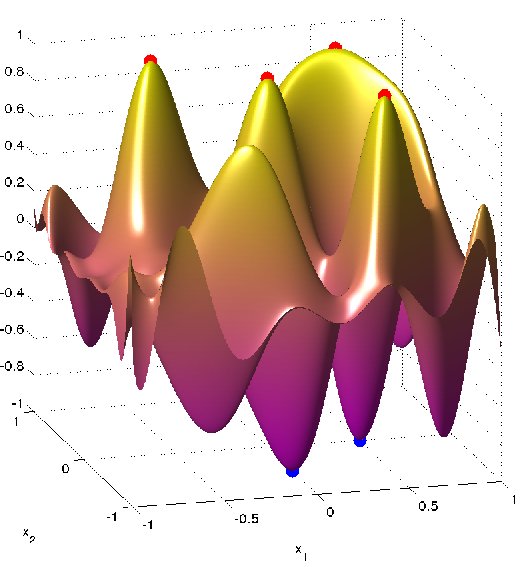}
\end{center}
\caption{Degree 12 polynomial certificate for the bivariate example, {with 4 points (red) in the
support of the positive part, and 2 points (blue) in the support of the negative part of the optimal measure.}
\label{figex2}}
\end{figure}
In the Fourier frame, the recent SDP formulations of $\ell_{1}$-minimization in the space of complex valued measures are based on the Fej\'er-Riesz theorem. As a consequence, they cannot handle dimensions greater than $3$. Observe that our procedure can bypass this limitation. For sake of readability, we present an example in dimension 2 although it can be extended to any dimension. We want to recover the measure
\[
\mu := \delta_{(-1/2,1/2)}+\delta_{(1/2,-1/2)}+\delta_{(1/2,1/2)}+\delta_{(0,0)}
-\delta_{(0,-1/2)}-\delta_{(1/2,0)}
\]
on the box $X:=[-1,1]^2$, {from the knowledge of moments of degree up to 12, i.e. $a_i(x)=x^i$ for $i=0,1,\ldots,12$.}
Solving the SDP relaxation of order {$k=6$} on our standard PC
takes less than 3 seconds, and optimality is certified {from the solution of the primal moment problem}
with a rank 4 moment matrix {for $\mu_+$} for  and a rank 2 moment matrix {for $\mu_-$}
from which the 6 points {of the support of the optimal measure $\mu$}
can be extracted {using numerical linear algebra} with a relative accuracy around $10^{-6}$.
On Figure \ref{figex2} we represent the degree 12 polynomial
certifying optimality, {constructed from the solution of the dual SOS problem}.
Indeed we can check that the polynomial
attains the value $+1$ at the 3 points $x\in\{(-1/2,1/2),(1/2,-1/2),(0,0)\}$, it attains the value $-1$
at the 2 points $x\in\{(0,-1/2),(1/2,0)\}$ (in blue), while taking values between $-1$ and $+1$ on $X$.

\subsubsection{Localization of points on the sphere}

\begin{figure}[h!]
\begin{center}
\includegraphics[width=0.8\textwidth]{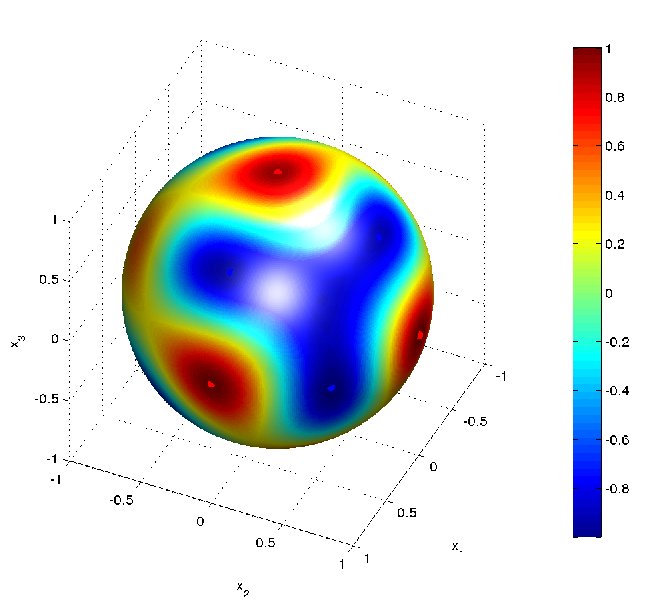}
\end{center}
\caption{Degree {6} polynomial certificate for the sphere example, {with 3 points (red) in the
support of the positive part, and 3 points (blue) in the support of the negative part of the optimal measure.}}
\label{figex3}
\end{figure}

Recent extensions of Super Resolution to spike deconvolution on the $2$-sphere from spherical harmonic measurements has been investigated in \cite{bendory2013exact,bendory2014super}. In these paper, the authors give a sufficient condition for exact recovery using $\ell_{1}$-minimisation and they investigate spikes localization when the measurements are perturbed by additive noise. 

From a numerical point of view, they used a relaxed version of the dual program (bounded real lemma in dimension $d=3$ and a Gram representation of the $\ell_{\infty}$-constraint appearing in the dual). Our work naturally extends to this frame and provides an exact formulation of the primal form.

For sake of numerical code simplicity, we have considered polynomials on $\R^{3}$ restricted to the domain $X$ given by the $2$-sphere (note that one could have used homogenous spherical harmonics instead as in \cite{bendory2013exact}).
{We want to recover the measure}
\[
{
\mu :=\delta_{(1,0,0)}+\delta_{(0,1,0)}+\delta_{(0,0,1)}-
\delta_{(\frac{\sqrt{2}}{2},\frac{\sqrt{2}}{2},0)}-\delta_{(\frac{\sqrt{2}}{2},0,\frac{\sqrt{2}}{2})}-\delta_{(0,\frac{\sqrt{2}}{2},\frac{\sqrt{2}}{2})}
}
\]
{that is supported on the positive orthant just for better visualization purposes.}
{In LP (\ref{p}), the $a_i$ consist of 3-variate monomials of degree up to $5$, i.e. $m=56$.}
Solving the SDP relaxation of order $k=${6} on our standard PC takes less than $20$ seconds, and optimality is certified from the solution of the primal moment problem with a rank {3} moment matrix {for $\mu_+$} and a rank {3} moment matrix {for $\mu_-$}, from which the {6} points can be extracted {using numerical linear algebra}. On Figure \ref{figex3} we represent the degree {6} polynomial certifying optimality on the $2$-sphere, {constructed from the solution of the dual SOS problem}. Indeed we can check that the polynomial
attains the value $+1$ at the 3 prescribed points, it attains the value $-1$ (in red), at the 3 others prescribed points (in blue), while taking values between $-1$ and $+1$ on $X$.

From a theoretical point of view, the minimal separation condition appearing in \cite{bendory2013exact} requires a degree $2$ polynomial. So our example satisfy the sufficient aforementioned condition.

\section{Primal and dual LP formulation}
\subsection{General model and notation}
\label{sugemo}
Let $n$ be a positive integer. Denote by $\R[x]$ the set of all polynomials on $\R^n$, and for $d\in\N$, $\R_d[x]$ the set of all polynomials on $\R^n$ with degree not greater than $d$. Further, we use the following notation:
\begin{itemize}
\item $X\subset \R^n$, is a given closed {basic} semi-algebraic set:
\[
X:= \{ x \in {\mathbb R}^n : g_j(x) \geq 0, \:j=1,\ldots,n_X\}
\]
where $g_j \in {\mathbb R}[x]$, $j=1,\ldots,n_X$, are given polynomials whose degrees are denoted by $d_j$, $j=1,\ldots,n_X$. It is assumed that $X$ is compact with an algebraic certificate of compactness. For example, one of the polynomial inequalities $g_j(x)\geq 0$ should be of the form: 
\[R^2-\sum_{i=1}^n x_i^2 \geq 0\,,
\]
for $R$ a sufficiently large  constant. Let $\g_X :=(g_j)_{j=1,\ldots,n_X}$.
\item Let $\a=(a_{i})_{i=1}^{m}$ be a  linearly independent family of polynomials of degree at most $d$ on $X$. Notice that $m\leq(1+d)^n$. 
\item For monomials we use the multi-index notation
\[
x^{\alpha} := \prod_{j=1}^n x_j^{\alpha_j}
\]
for every $x = (x_1,\ldots,x_n) \in {\mathbb R}^n$ and $\alpha =  (\alpha_1,\ldots,\alpha_n) \in {\mathbb N}^n$. 
\item $\C(X)$, the space of continuous functions on $X$, a Banach space when equipped with the 
sup-norm: 
\[\Vert f\Vert=\sup_{x\in X}\vert f(x)\vert\,.\]

\item $\M(X)$, the space of {signed Radon} measures on $X$,
a Banach space isometrically isomorphic to
the {topological} dual $\C(X)^*$ when equipped with the total variation norm: 
\[\normTV{\mu}=\sup_{\mathcal P}\sum_{E\in\mathcal P}\abs{\mu(E)}\,,\] 
where the supremum is taken over all partitions $\mathcal P$ of $X$ into a finite number of disjoint measurable subsets. 

\item $\C(X)_+\subset \C(X)$ and $\M(X)_+\subset\M(X)$ the respective positive cones of 
nonnegative continuous  functions on $X$ and (nonnegative) {Radon} measures on $X$. We use the 
standard notation
$f\geq0$ and $\mu\geq0$ for membership in $\C(X)_+$ and $\M(X)_+$, respectively.
\item {To denote the integration of a function against a measure, we use} the duality bracket:
\[\langle f,\mu\rangle=\displaystyle\int_X fd\mu\,,\]
for all $f\in \C(X)$, and $\mu\in\M(X)$.
\end{itemize}

\ni
With the usual Jordan decomposition:
\[
\mu=\mu_+-\mu_-
\]
into a sum of two nonnegative Borel measures $\mu_+,\mu_-$,
the optimization problem (\ref{tv})
can be rewritten equivalently as
a linear programming (LP) problem in the convex cone $\M(X)_+$, namely:
\begin{equation}\label{p}
\begin{array}{rcll}
p^* & = & \inf & \langle 1,\mu_+ \rangle + \langle 1,\mu_- \rangle \\
&& \mathrm{s.t.} & \langle a_i, \mu_+ \rangle - \langle a_i, \mu_- \rangle = b_i, \quad i=1,\ldots,m \\
&&& \mu_+ \in \M(X)_+ \\
&&& \mu_- \in \M(X)_+.
\end{array}
\end{equation}
\ni
If  $\a=(a_i)_{i=1,\ldots,m}\in\C(X)^m$ and $\b=(b_i)_{i=1,\ldots,m}\in \R^m$,
problem (\ref{p}) is the dual of the following LP problem in the convex cone ${\mathscr C}(X)_+$:
\begin{equation}\label{d}
\begin{array}{rcll}
d^* & = & \sup & \b^\top \u \\
&& \mathrm{s.t.} & z_+(x) := 1 + \a^\top(x) \u \in {\mathscr C}(X)_+\\
&&& z_-(x) := 1 - \a^\top(x) \u \in {\mathscr C}(X)_+
\end{array}
\end{equation}
where the maximization is w.r.t. $\u=(u_i)_{i=1,\ldots,m} \in {\mathbb R}^m$. Remark that LP problem (\ref{d}) can be also written as:
\[
\begin{array}{rcll}
d^* & = & \sup & \b^\top\u \\
&& \mathrm{s.t.} & \|\a^\top(x)\u\|_{\infty} \leq 1.
\end{array}
\]

\begin{lemma}\label{sd}
There is no duality gap between primal LP (\ref{p}) and dual LP (\ref{d}), i.e. $p^*=d^*$.
\end{lemma}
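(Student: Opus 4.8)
The plan is to prove $p^*=d^*$ by establishing the two inequalities separately. The inequality $d^*\le p^*$ (weak duality) is elementary and follows from a direct pairing between a dual-feasible $\u$ and a primal-feasible pair $(\mu_+,\mu_-)$. For the reverse inequality (absence of a duality gap) I would recognise \eqref{p} as the Lagrangian dual of the function-space program \eqref{d} and invoke the strong-duality theorem for convex conic programs, whose Slater hypothesis is immediate here because the constant function $1$ is an interior point of the cone $\C(X)_+$.

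\emph{Weak duality.} The two constraints of \eqref{d} together say exactly that $z_+=1+\a^\top\u\ge0$ and $z_-=1-\a^\top\u\ge0$ on $X$, equivalently $\|\a^\top\u\|_\infty\le1$. Let $\u$ be feasible for \eqref{d} and $(\mu_+,\mu_-)$ feasible for \eqref{p}. Using the equality constraints of \eqref{p},
\[
\b^\top\u=\sum_{i=1}^m u_i\big(\langle a_i,\mu_+\rangle-\langle a_i,\mu_-\rangle\big)=\langle\a^\top\u,\mu_+\rangle-\langle\a^\top\u,\mu_-\rangle .
\]
Since $\mu_+\ge0$ and $z_-\ge0$ we have $\langle z_-,\mu_+\rangle\ge0$, that is $\langle\a^\top\u,\mu_+\rangle\le\langle1,\mu_+\rangle$; symmetrically $\mu_-\ge0$ and $z_+\ge0$ give $-\langle\a^\top\u,\mu_-\rangle\le\langle1,\mu_-\rangle$. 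Adding these, $\b^\top\u\le\langle1,\mu_+\rangle+\langle1,\mu_-\rangle$, and taking the supremum over feasible $\u$ and the infimum over feasible $(\mu_+,\mu_-)$ gives $d^*\le p^*$.

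\emph{Absence of a duality gap.} I would regard \eqref{d} as a conic program in the finite-dimensional variable $\u\in\R^m$ subject to $(z_+,z_-)\in K:=\C(X)_+\times\C(X)_+$, a closed convex cone of the Banach space $\C(X)\times\C(X)$. Under the isometry $\M(X)\cong\C(X)^*$ the dual cone of $\C(X)_+$ is $\M(X)_+$, and computing the Lagrangian
\[
\b^\top\u+\langle z_+,\mu_+\rangle+\langle z_-,\mu_-\rangle,\qquad(\mu_+,\mu_-)\in\M(X)_+\times\M(X)_+ ,
\]
one finds that its supremum over $\u$ is finite (and equal to $\langle1,\mu_++\mu_-\rangle$) precisely when $\langle a_i,\mu_+-\mu_-\rangle=-b_i$ for all $i$; after the harmless relabelling $\mu_+\leftrightarrow\mu_-$ this is exactly \eqref{p}. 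It then remains to exhibit a Slater point for \eqref{d}: taking $\u=0$ yields $z_+=z_-\equiv1$, and since $X$ is compact the constant $1$ lies in the sup-norm interior of $\C(X)_+$ (any function within distance $\tfrac12$ of $1$ is nonnegative), so $(1,1)\in\mathrm{int}\,K$. The strong-duality theorem for convex conic programs under Slater's condition then yields $p^*=d^*$, with the infimum in \eqref{p} attained.

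The only real work lies in the middle step. The hard part will be the infinite-dimensional bookkeeping: one must verify that the adjoint of the affine map $\u\mapsto(1+\a^\top\u,\,1-\a^\top\u)$ is taken in the correct duality $(\C(X),\M(X))$ so that the Lagrangian dual is \emph{exactly} \eqref{p}, and that Slater's condition holds in the sup-norm topology rather than merely as pointwise positivity. Finiteness of the two values is not an issue: $\u=0$ is feasible for \eqref{d} and assumption \eqref{a} makes \eqref{p} feasible, so by weak duality both values are finite and the degenerate cases are excluded (we assume $X\neq\emptyset$ throughout). A self-contained alternative would avoid conic-duality theorems altogether and instead prove weak-$*$ lower semicontinuity at $\beta=0$ of the primal value function $\beta\mapsto\inf\{\langle1,\mu_++\mu_-\rangle:\langle a_i,\mu_+-\mu_-\rangle=b_i+\beta_i,\ \mu_\pm\ge0\}$: minimising sequences are bounded in total variation, since $\|\mu_\pm\|_{TV}=\langle1,\mu_\pm\rangle$ for nonnegative measures, hence weak-$*$ sequentially compact by Banach--Alaoglu, while the functionals $\mu\mapsto\langle a_i,\mu\rangle$ and $\mu\mapsto\langle1,\mu\rangle$ are weak-$*$ continuous because $a_i,1\in\C(X)$; this gives $p^*=v^{**}(0)=d^*$.
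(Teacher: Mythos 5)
Your proof is correct, but it takes a genuinely different route from the paper's. The paper works on the primal (measure) side: it forms the image set $R=\{(\langle 1,\mu_+\rangle+\langle 1,\mu_-\rangle,\,\langle a_1,\mu_+-\mu_-\rangle,\ldots,\langle a_m,\mu_+-\mu_-\rangle)\}\subset\R^{m+1}$, proves that $R$ is closed via Banach--Alaoglu (the masses of a sequence realizing a convergent point of $R$ are uniformly bounded on the compact set $X$, so a weak-$*$ convergent subsequence exists, and weak-$*$ continuity of $\mu\mapsto\langle a_i,\mu\rangle$ passes the limit back into $R$), and then invokes the closedness-based strong duality theorem \cite[Theorem 7.2]{barvinok}, finiteness of $p^*$ coming from the feasibility assumption \eqref{a}. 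You instead work on the dual (function) side: you view \eqref{d} as a conic program in the finite-dimensional variable $\u$ with cone $K=\C(X)_+\times\C(X)_+$, verify that its Lagrangian dual in the pairing $(\C(X),\M(X))$ is exactly \eqref{p} (your sign bookkeeping and the relabelling $\mu_+\leftrightarrow\mu_-$ are right), and check Slater's condition --- the key point being that compactness of $X$ makes the constant function $1$ a sup-norm interior point of $\C(X)_+$, so $\u=0$ is strictly feasible --- before invoking the standard infinite-dimensional Lagrange duality theorem (Luenberger-type); finiteness of $d^*$ follows from weak duality and \eqref{a}, as you note. What each approach buys: the paper's closedness argument is self-contained modulo Barvinok's theorem and rehearses the Banach--Alaoglu compactness technique reused later (e.g.\ in the proof of Lemma \ref{duality}); your Slater route is shorter, relies only on the most classical duality theorem, and yields as a by-product that the infimum in \eqref{p} is \emph{attained} (dual attainment in the Lagrangian theorem), a fact not stated in Lemma \ref{sd} but implicitly used in the paper's subsequent lemma on atomic optimal measures. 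Your closing ``self-contained alternative'' (weak-$*$ lower semicontinuity of the perturbed value function via Banach--Alaoglu) is essentially the paper's closedness argument rewritten in perturbation-function language, so the two documents are even closer there.
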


\begin{proof}
Define the vector $r(\mu_+,\mu_-)\in\R^{m+1}$ by:
\[
r(\mu_+,\mu_-)
:=(\langle 1,\mu_+\rangle + \langle 1,\mu_-\rangle, \:
\langle a_1,\mu_+\rangle - \langle a_1,\mu_-\rangle, \:\ldots\:
\langle a_m,\mu_+\rangle - \langle a_m,\mu_-\rangle)
\]
and the set
\[
R:=\{r(\mu_+,\mu_-) \: :\: (\mu_+,\:\mu_-) \in \M(X)_+\times  \M(X)_+\} \subset {\mathbb R}^{m+1}.
\]
By \cite[Theorem 7.2]{barvinok}, $p^*=d^*$ provided that
$p^*$ is finite and $R$ is closed.
Finiteness of $p^*$ follows from Assumption \ref{a} and nonnegativity
of the objective function $\langle 1,\mu_+\rangle + \langle 1,\mu_-\rangle$.
To prove that $R$ is closed we have to show that for any sequence $(\mu_+^n,\:\mu_-^n)_{n\in\mathbb N} \in \M(X)_+\times \M(X)_+$
such that  $r(\mu_+^n,\mu_-^n) \to \s\in {\mathbb R}^{m+1}$ as $n\to\infty$, one has
$\s=r(\mu,\nu)$ for some finite measures $\mu,\nu\in\M(X)_+$.
Since the supports of all the measures are contained in a compact set,
and since $\langle 1,\mu^n_+\rangle+\langle 1,\mu^n_-\rangle\to s_0$
all measures $\mu^n_+,\mu^n_-$ are uniformly bounded.
Therefore, from the weak-* compactness (and weak-* sequential compactness) of the unit ball (Banach-Alaoglu's Theorem),
there is a subsequence $(\mu_+^{n_k},\:\mu_-^{n_k})_{k\in\mathbb N}$ that converges
weakly-* to an element $(\mu,\:\nu) \in\M(X)_+\times \M(X)_+$.
In particular, as all $a_i$ are continuous,
\[\lim_{k\to\infty} r(\mu_+^{n_k},\:\mu_-^{n_k}) \,=\, r(\mu,\nu),\]
which proves that $R$ is closed.
\end{proof}

\begin{lemma}\label{sa}
For the dual LP problem (\ref{d}) the supremum is attained.
\end{lemma}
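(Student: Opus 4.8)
The plan is to show that the feasible set of the dual problem \eqref{d} is a nonempty compact subset of $\R^m$, so that the linear objective $\b^\top\u$, being continuous, attains its supremum by the Weierstrass extreme value theorem. Using the equivalent form $\norm{\a^\top\u}\le 1$ of the constraints noted just after \eqref{d}, the feasible set is
\[
F:=\{\u\in\R^m : \norm{\a^\top\u}\le 1\},
\]
where $\norm{\cdot}$ denotes the sup-norm over the compact set $X$ of the function $\a^\top\u=\sum_{i=1}^m u_i a_i\in\C(X)$. This set is nonempty, since $\u=0\in F$, and it is closed, since $\u\mapsto\norm{\a^\top\u}$ is continuous (indeed, $X$ being compact and the $a_i$ continuous, the sup-norm of $\a^\top\u$ depends continuously on the coefficients $\u$).

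The crux is to prove that $F$ is bounded. Here I would exploit the standing hypothesis that the family $\a=(a_i)_{i=1}^m$ is linearly independent on $X$: the linear map $\u\mapsto\a^\top\u$ from $\R^m$ into $\C(X)$ is then injective, so that $N(\u):=\norm{\a^\top\u}$ is a genuine norm on $\R^m$. Absolute homogeneity and the triangle inequality are inherited directly from the sup-norm, while $N(\u)=0$ forces $\a^\top(x)\u=0$ for every $x\in X$, whence $\u=0$ by linear independence. Since all norms on the finite-dimensional space $\R^m$ are equivalent, the unit ball $F=\{\u : N(\u)\le 1\}$ of $N$ is bounded; combined with closedness, this makes $F$ compact.

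Finally, the objective $\u\mapsto\b^\top\u$ is continuous on the nonempty compact set $F$, so its supremum over $F$ is attained, i.e. the dual value $d^*$ is a maximum, as claimed. The only delicate step is the boundedness of $F$, which rests entirely on the linear independence of $\a$ on $X$: without it, $N$ would be merely a seminorm, $F$ would contain an entire line, and $\b^\top\u$ could fail to be bounded above on $F$. Everything else — nonemptiness, closedness, and the passage from a compact feasible set to an attained maximum — is routine and does not depend on the specific structure of $X$ beyond its compactness.
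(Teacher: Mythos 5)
Your proof is correct and follows essentially the same strategy as the paper: both establish that the feasible set $U=\{\u\in\R^m : \|\a^\top(x)\u\|_\infty\le 1\}$ is nonempty, closed and bounded — with linear independence of $\a$ on $X$ as the crucial ingredient for boundedness — and then invoke continuity of the linear objective on a compact set. The only difference is cosmetic: you package the boundedness step as ``$\u\mapsto\|\a^\top(x)\u\|_\infty$ is a genuine norm, hence equivalent to the Euclidean norm on $\R^m$,'' whereas the paper proves the same fact by hand via a normalization-and-subsequence contradiction argument, which is precisely how norm equivalence in finite dimensions is proved anyway.
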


\begin{proof}
The feasibility set 
\[
U:=\{\u \in {\mathbb R}^m \: :\: \|\a^\top(x)\u\|_{\infty} \leq 1\}
\]
of the LP problem (\ref{d}) is a closed convex subset of a finite-dimensional
Euclidean space, and it contains the origin. Since the objective function in LP (\ref{d})
is continuous on $U$, the optimum is attained if
$U$ is bounded. Suppose that $U$ is not bounded. Then
there exists a sequence $(\u_n)_{n\in\mathbb N}\subset U$ such that
$\Vert \u_n\Vert\to\infty$ as $n\to\infty$. Write $\u_n=\lambda_n \v_n$, with $\Vert \v_n\Vert=1$.
Notice that $0<\lambda_n\to\infty$ and $\v_n\in U$ because $0\in U$ and $U$ is convex.
Then
\[\|\a^\top(x) \u_n\|_{\infty}\,=\, \|\a^\top(x)\lambda_n \v_n\|_{\infty} \,=\,
= \lambda_n \|\a^\top(x)\v_n\|_{\infty} \leq 1,\]
so that $\|\a^\top(x)\v_n\|_{\infty} \leq \lambda_n^{-1}\to 0$ as $n\to\infty$. Since $\Vert \v_n\Vert=1$,
there exists a subsequence $n_k$ and $\v$ with $\Vert \v\Vert=1$ such that
$\v_{n_k}\to \v$ as $k\to\infty$ and $\|\a^\top(x) \v\|_{\infty} =0$.
By {linear independence of $\a$,} this implies that $\v=0$, a contradiction.
\end{proof}

As a consequence of strong duality of Lemma \ref{sd},
for any optimal primal-dual pair
$(\mu,u)$ we have the complementarity conditions
\[
\langle z_+, \mu_+ \rangle = 0, \quad \langle z_-, \mu_- \rangle = 0
\]
implying jointly with Lemma \ref{sa} that
\[
\mathrm{spt}\:\mu_+ \subset \{\,x \in X : \a^\top(x)\,\u = 1\,\}
\]
and 
\[
\mathrm{spt}\:\mu_- \subset \{\,x \in X : \a^\top(x)\,\u = -1\,\}
\]
for some continuous function $x\mapsto \a^\top(x)\u$, and where $\mathrm{spt}\,\mu$ denotes the {\it support} of $\mu$, that is, the smallest closed set $S\subset \R^n$  such that $\mu(\R^n\setminus S)=0$.

\begin{lemma}
Problem (\ref{tv}) has an optimal atomic measure
supported on at most $2(m+1)$ points.
\end{lemma}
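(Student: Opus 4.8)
The plan is to work with the equivalent LP (\ref{p}) over the cone $\M(X)_+\times\M(X)_+$ and to proceed in two stages: first secure a (not necessarily atomic) optimal pair $(\mu_+^\star,\mu_-^\star)$, and then replace $\mu_+^\star$ and $\mu_-^\star$ one at a time by finitely supported measures carrying exactly the same values of the $m+1$ functionals $1,a_1,\dots,a_m$. Since sparsifying each measure will cost at most $m+1$ atoms, the difference $\mu:=\tilde\mu_+-\tilde\mu_-$ of the two sparsified measures will be atomic with at most $2(m+1)$ atoms, and I must only check that feasibility and the objective value are preserved so that it remains optimal for (\ref{tv}).

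For primal attainment I would reuse the machinery of Lemma \ref{sd}. There the set $R=\{r(\mu_+,\mu_-)\}\subset\R^{m+1}$ is shown to be closed; intersecting it with the closed affine subspace $\{s:\,s_i=b_i,\ i=1,\dots,m\}$ yields a nonempty (Assumption \ref{a}) closed set over which the continuous linear functional $s\mapsto s_0$ is minimized. Because $s_0=\langle1,\mu_+\rangle+\langle1,\mu_-\rangle\ge0$, the value $p^*$ is a finite infimum, and a minimizing sequence is bounded in $\R^{m+1}$, hence has a limit point lying in the closed section; this limit point is $r(\mu_+^\star,\mu_-^\star)$ for some $(\mu_+^\star,\mu_-^\star)\in\M(X)_+\times\M(X)_+$, which is therefore optimal. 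Equivalently, one may extract a weak-$*$ convergent subsequence of a minimizing sequence of measures exactly as in Lemma \ref{sd}.

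The heart of the argument is the sparsification step. I introduce the continuous moment map $\gamma:X\to\R^{m+1}$, $\gamma(x):=(1,a_1(x),\dots,a_m(x))$, and for $\nu\in\M(X)_+$ the vector $\int_X\gamma\,\d\nu\in\R^{m+1}$. I claim $\int_X\gamma\,\d\nu$ lies in the conic hull $\mathrm{cone}(\gamma(X))$ of finitely many points of the compact curve $\gamma(X)$. The key observation is that the first coordinate of $\gamma$ is identically $1$, so $0\notin\mathrm{conv}(\gamma(X))$; a separating-hyperplane argument then shows that $\mathrm{cone}(\gamma(X))$ equals the cone over the compact convex set $\mathrm{conv}(\gamma(X))$ and is therefore closed, and that it contains $\int_X\gamma\,\d\nu$ since every closed half-space containing $\gamma(X)$ also contains the integral. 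Applying Carath\'eodory's theorem for convex cones in $\R^{m+1}$, I obtain points $x_1,\dots,x_r\in X$ with $r\le m+1$ and weights $w_1,\dots,w_r\ge0$ such that $\int_X\gamma\,\d\nu=\sum_{j=1}^r w_j\gamma(x_j)$; equivalently the atomic measure $\sum_j w_j\delta_{x_j}$ matches $\nu$ on each of $1,a_1,\dots,a_m$. This is precisely the Richter--Rogosinski representation theorem, which could also be cited directly.

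To conclude, apply the previous step to $\nu=\mu_+^\star$ and $\nu=\mu_-^\star$, obtaining atomic $\tilde\mu_+$ and $\tilde\mu_-$ with at most $m+1$ atoms each and with $\langle a_i,\tilde\mu_\pm\rangle=\langle a_i,\mu_\pm^\star\rangle$ for $i=0,1,\dots,m$, where $a_0:=1$. The pair $(\tilde\mu_+,\tilde\mu_-)$ is then feasible for (\ref{p}) with the same objective value $p^*$, hence optimal, and $\mu:=\tilde\mu_+-\tilde\mu_-$ is an optimal atomic measure for (\ref{tv}) supported on at most $2(m+1)$ points. I expect the main obstacle to be the sparsification step, specifically justifying cleanly that the integral $\int_X\gamma\,\d\nu$ belongs to the finitely generated cone $\mathrm{cone}(\gamma(X))$ and not merely to its closure, before Carath\'eodory can be applied; the normalization forced by the constant first coordinate of $\gamma$ is exactly what removes this difficulty.
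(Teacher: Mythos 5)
Your proof is correct, and its skeleton is the same as the paper's: obtain an optimal pair $(\mu_+^\star,\mu_-^\star)$ for (\ref{p}), replace each part by an atomic measure with at most $m+1$ atoms reproducing the $m+1$ functionals $1,a_1,\dots,a_m$ (so that feasibility \emph{and} the objective value are preserved), and take the union of the two supports. The difference lies in how the sparsification step is justified: the paper normalizes $\mu_\pm^\star$ to probability measures and invokes \cite[Proposition 9.4]{barvinok} (exactly the Richter--Rogosinski theorem you mention), whereas you prove that statement from scratch via conic Carath\'eodory, using the constant first coordinate of the moment map $\gamma$ to show both that $\mathrm{cone}(\gamma(X))$ is closed (cone over a compact convex set avoiding the origin) and, via half-spaces, that $\int_X\gamma\,\d\nu$ lies in it rather than merely in its closure --- the very point you flag as the main obstacle, and which your normalization observation does resolve. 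Your route buys self-containedness; the paper's citation buys brevity. You also supply an ingredient the paper's proof silently assumes: attainment of the infimum in (\ref{p}) (the paper opens with ``let $\mu_+$ be a measure solving (\ref{p})'' without argument), which you correctly extract from the closedness of the set $R$ established in Lemma \ref{sd}. A cosmetic remark: your conic Carath\'eodory count of $m+1$ could equally be obtained by affine Carath\'eodory inside the hyperplane $\{y_0=1\}$, which is in effect what the paper's normalization to probability measures accomplishes.
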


\begin{proof}
Let $\mu_+$ be a nonnegative measure solving problem (\ref{p}) and
let $b_{+i}:=\langle a_i,\mu_+\rangle$, with $a_0:=1$, $i=0,1,\ldots,m$.
If $b_+=0$ then $\mu_+=0$ is trivially atomic (with no atoms), so
assume $b_{+0}\neq 0$,
and consider the probability measure $\bar{\mu}_+:=\mu_+/b_{+0}$ which satisfies
the $m$ equality constraints $\langle a_i,\bar{\mu}_+\rangle = \bar{b}_{+i} :=
b_{+i}/b_{+0}$, $i=1,\ldots,m$. From \cite[Proposition 9.4]{barvinok}
there exists a probability measure $\hat{\mu}_+$ satisfying the
same equality constraints $\langle a_i,\hat{\mu}_+\rangle = \bar{b}_{+i}$
and which is supported on (at most) {$m+1$} points of $X$.
The same reasoning can be applied to any nonnegative measure $\mu_-$ solving problem (\ref{p}),
which has a discrete counterpart $\hat{\mu}_-$ supported on (at most) $m+1$ points of $X$.
The result follows by considering the union of these two discrete supports,
which consists of (at most) $2(m+1)$ points of $X$.
\end{proof}


\section{Primal and dual SDP formulation}

Problem (\ref{p}) is an instance of a generalized moment problem.
As such it can be solved by a converging hierarchy of 
finite-dimensional primal-dual semidefinite programming (SDP) problems, 
as described comprehensively in \cite{lasserre}. In the sequel,
we extract the key instrumental ingredients to the construction
of the hierarchy. 

\subsection{Primal moment SDP}

Recall from paragraph \ref{sugemo} that
\[
X := \{x \in \R^n \: : g_j(x) \geq 0, \: j=1,\ldots,n_X\}
\]
is a  basic semi-algebraic set with a straighforward certificate of compactness,
and  let $\g_X :=(g_j)_{j=1,\ldots,n_X}$ denote its defining polynomials.
Given a measure $\mu \in {\mathscr R}_{+}(X)$, the real number
\begin{equation}\label{moment}
y_{\alpha} := \langle x^{\alpha}, \mu \rangle
\end{equation}
is called its {\it moment} of order $\alpha \in \N^n$. Conversely, given
a real valued sequence $y:=(y_\alpha)_{\alpha \in {\mathbb N}^n}$,
if identity \eqref{moment} holds for all $\alpha \in \N^n$,
we say that $y$ has a representing measure $\mu \in {\mathscr R}_{+}(X)$. Equivalently,
sequence $y$ belongs to the infinite-dimensional moment cone
\[
{\mathscr M}(X) := \{(y_{\alpha})_{\alpha \in \N^n} \: :\: y_{\alpha} = \langle x^{\alpha}, \mu \rangle, \:\:
\mu \in {\mathscr R}(X)_+\}.
\]
In the sequel we describe a procedure to approximate this convex cone.

Given $k\in\N$,  let $\R[x]_k$ denote the space of real polynomials of degree at most $k$.
Let us identify a polynomial $p(x) = \sum_{\alpha} p_{\alpha}x^{\alpha} \in \R[x]_k$
with its vector $p$ of coefficients in the monomial basis.
Define the Riesz functional $\ell_y$ as the linear functional acting on polynomials
as follows:
$p \in \R[x]_k \mapsto \ell_y(p) = \sum_{\alpha} p_{\alpha}y_{\alpha} = p^\top y \in \R$.
Note that if sequence $y$ has a representing measure $\mu$, then $\ell_y(p) = \langle p,\mu \rangle$.
Define the {\it moment matrix} of order $k$ as the Gram matrix of the
quadratic form $p \in \R[x]_k \mapsto \ell_y(p^2) \in \R$, i.e. the matrix $M_k(y)$
such that $\ell_y(p^2) = p^\top M_k(y) p$. By construction 
this matrix is symmetric and linear in $y$.
Given a polynomial $g \in \R[x]$, define its localizing matrix of order $k$ as
the Gram matrix of the quadratic form $p \in \R[x]_k \mapsto \ell_y(gp^2) \in \R$, 
i.e. the matrix $M_k(g\,y)$ such that $\ell_y(gp^2) = p^\top M_k(g\,y) p$. 
By construction this matrix is symmetric and linear in $y$.
For $j=1,\ldots,n_X$,
let $k_j$ denote the smallest integer not less than half the degree of polynomial $g_j$,
and let $k_X:=\max\{1,k_1,\ldots,k_{n_X}\}$.
With these notations, and for $k \geq k_X$, define the finite-dimensional moment cone
\[
{\mathscr M}_k(\g_X) := \{(y_{\alpha})_{|\alpha|\leq 2k} \: :\: M_k(y) \succeq 0, \: M_{k-k_j}(g_j\,y) \succeq 0, \:
j=1,\ldots,n_X\}
\]
where $\succeq 0$ means positive semidefinite. 

Let $y_+$ resp. $y_-$ denote the sequence of moments
\[
y_{+\alpha} := \int x^{\alpha} \mu_+(dx), \quad y_{-\alpha} := \int x^{\alpha} \mu_-(dx)
\]
of $\mu_+$ (resp. $\mu_-$), indexed by $\alpha \in {\mathbb N}^n$.
Primal measure LP (\ref{p}) can be written as a primal moment LP:
\[
\begin{array}{rcll}
p^* & = & \min & y_{+0} + y_{-0} \\
& & \mathrm{s.t.} & \A(y_+,y_-) = \b \\
& & & y_+ \in {\mathscr M}(X)\\
& & & y_- \in {\mathscr M}(X) \\
\end{array}
\]
where the linear system of equations $\A(y_,y_-) = \b$ models the linear moment constraints.
The moment relaxation of order $k\geq \max\{k_X,d\}$ of the primal  moment LP then reads:
\begin{equation}\label{psdp}
\begin{array}{rcll}
p^*_k & = & \min & y_{+0} + y_{-0} \\
& & \mathrm{s.t.} & \A(y_+,y_-) = \b \\
& & & y_+ \in {\mathscr M}_k(\g_X)\\
& & & y_- \in {\mathscr M}_k(\g_X) \\
\end{array}
\end{equation}
where the minimization is w.r.t. a vector $(y_+,y_-)$ of moments of degree at most $2k$.
For fixed $k$, problem (\ref{psdp}) is a finite-dimensional linear programming problem in the convex cone
of positive semidefinite matrices, i.e. an SDP problem. When $k$ varies, the number of moments, as well as the size
of the moment and localizing matrices in problem (\ref{psdp}) are binomial coefficients growing in $O(k^n)$.

It can be shown that $(p^*_k)$ is a monotonically nondecreasing converging sequence
of lower bounds on $p^*$, i.e. $p^*_{k+1} \geq p^*_k$ and $\lim_{k\to\infty} p^*_k=p^*$.
However, in the context of solving LP (\ref{p}), a more relevant result is the following:

\begin{theorem}\label{rank}
For a given relaxation order $k \geq \max\{d,k_X\}$, let $(y^*_+,y^*_-)$ denote the solution of the moment SDP (\ref{psdp}).
If
\begin{equation}
\label{rank-cond}
\mathrm{rank}\:M_{k-k_X}(y^*_+) = \mathrm{rank}\:M_{k}(y^*_+) \:\:\text{and}\:\:
\mathrm{rank}\:M_{k-k_X}(y^*_-) = \mathrm{rank}\:M_{k}(y^*_-)
\end{equation}
then $p^*_k=p^*$ and LP (\ref{p}) has an optimal solution $(\mu^*_+,\mu^*_-)$
with $\mu^*_+$ (resp. $\mu^*_-$) atomic supported at $r_+:=\mathrm{rank}\:M_{k}(y^*_+),$ (resp. 
$r_-:=\mathrm{rank}\:M_{k}(y^*_-)$) points.
\end{theorem}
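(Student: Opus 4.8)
The plan is to show that the rank (flat extension) condition \eqref{rank-cond} forces each truncated moment sequence $y^*_+$ and $y^*_-$ to be the moment sequence of a finitely atomic measure supported on $X$, and then to verify that the resulting pair of measures is feasible and optimal for the original LP \eqref{p}. The whole argument hinges on one structural theorem about truncated moment matrices; the rest is bookkeeping between moment sequences and measures.

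First I would record an elementary monotonicity observation. Since $M_{k-k_X}(y^*_+)$ is a principal (top-left) submatrix of $M_k(y^*_+)$, the ranks are nondecreasing in the order, and because $k_X\geq 1$ we have $\mathrm{rank}\,M_{k-k_X}(y^*_+)\leq \mathrm{rank}\,M_{k-1}(y^*_+)\leq \mathrm{rank}\,M_k(y^*_+)$. Hence the equality in \eqref{rank-cond} collapses all intermediate ranks; in particular $\mathrm{rank}\,M_{k-1}(y^*_+)=\mathrm{rank}\,M_k(y^*_+)$, and likewise for $y^*_-$. This puts us precisely in the hypothesis of the flat extension theorem of Curto--Fialkow, as presented in \cite{lasserre}.

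The core step is to invoke that theorem. Because $M_k(y^*_+)\succeq 0$ is flat over $M_{k-1}(y^*_+)$ and the localizing matrices $M_{k-k_j}(g_j\,y^*_+)\succeq 0$ are nonnegative for $j=1,\ldots,n_X$, the truncated sequence $y^*_+$ admits a representing measure $\mu^*_+\in\M(X)_+$ that is atomic and supported on exactly $r_+:=\mathrm{rank}\,M_k(y^*_+)$ points of $X$; the flatness provides the representing measure while the localizing constraints pin its support inside $X=\{g_j\geq 0\}$. Running the same argument for $y^*_-$ yields an atomic $\mu^*_-\in\M(X)_+$ on $r_-:=\mathrm{rank}\,M_k(y^*_-)$ points. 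In particular $y^*_+,y^*_-\in\mathscr{M}(X)$, i.e. these sequences extend to genuine (not merely relaxed) moment sequences. I expect this invocation to be the main obstacle: one must check that the precise flatness hypothesis and the localizing data in \eqref{psdp} match the form required by the flat extension theorem, and that the atom count is exactly the rank.

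It then remains to close the optimality loop, which is routine. Since $k\geq d$, every constraint polynomial $a_i$ has degree at most $2k$, so the linear moment constraint $\A(y^*_+,y^*_-)=\b$ reads, through the representing measures, as $\langle a_i,\mu^*_+\rangle-\langle a_i,\mu^*_-\rangle=b_i$ for $i=1,\ldots,m$; hence $(\mu^*_+,\mu^*_-)$ is feasible for \eqref{p}. Its objective value is $\langle 1,\mu^*_+\rangle+\langle 1,\mu^*_-\rangle=y^*_{+0}+y^*_{-0}=p^*_k$. Because \eqref{psdp} is a relaxation of \eqref{p} we already know $p^*_k\leq p^*$; the existence of a feasible point of \eqref{p} attaining the value $p^*_k$ gives the reverse inequality $p^*\leq p^*_k$. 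Therefore $p^*_k=p^*$, and $(\mu^*_+,\mu^*_-)$ is an optimal solution of \eqref{p}, atomic with $r_+$ and $r_-$ atoms as claimed.
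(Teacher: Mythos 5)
Your proposal is correct and follows essentially the same route as the paper: the paper's entire proof consists of invoking the Curto--Fialkow flat-truncation result \cite[Theorem 3.11]{lasserre} to conclude that $y^*_+$ (resp.\ $y^*_-$) is the truncated moment vector of a measure $\mu^*_+$ (resp.\ $\mu^*_-$) supported on $\mathrm{rank}\:M_k(y^*_+)$ (resp.\ $\mathrm{rank}\:M_k(y^*_-)$) points of $X$, and then noting that $(\mu^*_+,\mu^*_-)$ is feasible for \eqref{p} with value $p^*_k\leq p^*$, which forces $p^*_k=p^*$ and optimality --- exactly your closing argument.

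One caveat on how you invoke the key theorem: the reduction to one-step flatness, $\mathrm{rank}\:M_{k-1}(y^*_+)=\mathrm{rank}\:M_k(y^*_+)$, is \emph{not} the hypothesis under which the theorem places the atoms inside $X$. One-step flatness yields an atomic representing measure on $\R^n$, but when some $g_j$ has degree larger than $2$ the moment vectors of degree $k-k_j$ evaluated at the atoms need not be linearly independent, and then $M_{k-k_j}(g_j\,y^*_+)\succeq 0$ no longer forces $g_j\geq 0$ at every atom. The version of the theorem that localizes the support in $X$ (the one stated in \cite{lasserre}) requires flatness over $k_X$ steps, i.e.\ $\mathrm{rank}\:M_{k-k_X}(y^*_+)=\mathrm{rank}\:M_k(y^*_+)$ --- which is precisely the hypothesis \eqref{rank-cond} you are handed, so you should invoke the theorem with that condition directly rather than through its weakened one-step consequence.
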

\begin{proof}
{By  \cite[Theorem 3.11]{lasserre}, $y^*_+$ (resp. $y^*_-$) is the vector of moments up to order 
$2k$, of a measure $\mu^*_+$ (resp. $\mu^*_-$) supported on
$\mathrm{rank}\:M_{k}(y^*_+)$ points (resp. $\mathrm{rank}\:M_{k}(y^*_-)$) points of $X$. Therefore
$(\mu^*_+,\mu^*_-)$ is a feasible solution of (\ref{p}) with value $p_k\leq p^*$, which proves that $(\mu^*_+,\mu^*_-)$ is 
an optimal solution of (\ref{p}) and $p_k=p^*$.
}
\end{proof}
\vspace{0.2cm}

Given a moment matrix $M_{k}(y^*_+)$ satisfying the rank constraint of Theorem \ref{rank},
there is a numerical linear algebra algorithm that extracts the $r_+$ points of the support of
the corresponding atomic measure $\mu^*_+$, and similarly for $\mu^*_-$.
The algorithm is described e.g. in \cite[Section 4.3]{lasserre} and it is implemented in the
Matlab toolbox GloptiPoly 3.

A certificate of optimality can be obtained by solving the dual problem to primal SDP
problem (\ref{psdp}), and this is described next.

\subsection{Dual SOS SDP}

For a given integer $k$, let $\Sigma[x]_k \subset \R[x]_{2k}$ denote the space
of SOS (sums of squares) polynomials of degree at most $2k$. If $p \in \Sigma[x]_k$
this means that there exists $q_j \in \R[x]_k$, $j=1,\ldots,n_p$, such
that $p=\sum_{j=1}^{n_p} q^2_j$.
Let
\[
{\mathscr P}(X) := \{p \in \R[x] \: :\: p(x) \geq 0, \: \forall x \in X\}
\]
denote the infinite-dimensional cone of nonnegative polynomials on $X$,
and for $k \geq k_X$,  define the finite-dimensional SOS cone, also called quadratic module
\[
{\mathscr P}_k(\g_X) := \{p_0 + \sum_{j=1}^{n_X} g_j p_j, \: p_j \in \Sigma[x]_{k-k_j}, \: j=0,1,\ldots,n_X\} \subset \R[x]_{2k}.
\] 
Under the above assumptions on the polynomial family $\g_X$ defining $X$,
from Putinar's theorem, {see e.g. \cite[Theorems 2.14 and 3.8]{lasserre},}
it holds {that ${\mathscr P}(X) \cap {\mathbb R}[x]_\kappa$ is the closure of
$\left(\cup_{k\geq k_X} {\mathscr P}_k(\g_X)\right) \cap {\mathbb R}[x]_\kappa$ for all $\kappa \geq k_X$.}
Observe also that ${\mathscr M}(X)$ (resp. ${\mathscr M}_k(\g_X)$) is the dual cone to ${\mathscr P}(X)$
(resp. ${\mathscr P}_k(\g_X)$). Whereas testing whether a given polynomial belongs to ${\mathscr P}(X)$
is a difficult task, testing whether a given polynomial belongs to ${\mathscr P}_k(\g_X)$, for a fixed $k$, amounts
to solving an SDP problem.

Dual continuous function LP (\ref{d}) can be written as a positive polynomial LP
\[
\begin{array}{rcll}
d^* & = & \max & \b^\top \u \\
& & \mathrm{s.t.} & 1 + \a^\top(x)\,\u \in {\mathscr P}(X) \\
& & & 1 - \a^\top(x)\,\u \in {\mathscr P}(X)
\end{array}
\]
and its SOS strengthening of order $k\geq k_X$ reads:
\begin{equation}\label{dsdp}
\begin{array}{rcll}
d^*_k & = & \max & \b^\top \u \\
& & \mathrm{s.t.} & 1 + \a^\top(x)\,\u \in {\mathscr P}_k(\g_X)\\
& & & 1 - \a^\top(x)\,\u \in {\mathscr P}_k(\g_X)\\
\end{array}
\end{equation}
where the maximization is w.r.t. a vector $\u \in {\mathbb R}^m$.
It turns out that this is SOS problem (\ref{dsdp}) is an SDP problem dual to
the moment problem (\ref{psdp}):
\begin{lemma}\label{duality}
There is no duality gap between SDP problems (\ref{psdp}) and (\ref{dsdp}),
i.e. $p^*_k=d^*_k$, and both (\ref{psdp}) and (\ref{dsdp}) have an optimal solution.
\end{lemma}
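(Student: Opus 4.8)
The plan is to treat the moment SDP (\ref{psdp}) as the primal and the SOS SDP (\ref{dsdp}) as the dual of a single conic program over the mutually dual cones $\mathscr{M}_k(\g_X)$ and $\mathscr{P}_k(\g_X)$, and to deduce the absence of a gap together with primal attainment from a Slater-type strict feasibility condition, handling dual attainment separately. First I would record weak duality: for primal-feasible $(y_+,y_-)$ and dual-feasible $\u$, using $b_i=\ell_{y_+}(a_i)-\ell_{y_-}(a_i)$,
\[
(y_{+0}+y_{-0})-\b^\top\u=\ell_{y_+}(1-\a^\top(x)\,\u)+\ell_{y_-}(1+\a^\top(x)\,\u)\ge 0,
\]
the inequality coming from $1\mp\a^\top(x)\,\u\in\mathscr{P}_k(\g_X)$ paired against $y_\pm\in\mathscr{M}_k(\g_X)=\mathscr{P}_k(\g_X)^*$. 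Hence $d^*_k\le p^*_k$. Feasibility of (\ref{psdp}) follows from Assumption \ref{a}: the moments up to order $2k$ of a feasible pair $(\mu_+,\mu_-)$ lie in $\mathscr{M}_k(\g_X)$ (recall $k\ge d$), and since the objective is nonnegative, $p^*_k$ is finite; weak duality then shows the dual value is bounded above by $p^*_k$.

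The crux is to exhibit a strictly feasible point of (\ref{dsdp}), and I would claim that $\u=0$ works, i.e. that the constant $1$ lies in $\mathrm{int}\,\mathscr{P}_k(\g_X)$. Since $\mathscr{P}_k(\g_X)\supseteq\Sigma[x]_k$ is full-dimensional in $\R[x]_{2k}$ with dual cone $\mathscr{M}_k(\g_X)$, the dual characterization of the interior of a pointed full-dimensional closed convex cone gives $1\in\mathrm{int}\,\mathscr{P}_k(\g_X)$ if and only if $\ell_y(1)=y_0>0$ for every $y\in\mathscr{M}_k(\g_X)\setminus\{0\}$. This positivity is exactly where the algebraic certificate of compactness enters: with $g:=R^2-\sum_i x_i^2$ among the $g_j$, the constraint $M_{k-1}(g\,y)\succeq0$ yields $M_{k-1}(\|x\|^2\,y)\preceq R^2 M_{k-1}(y)$, whose diagonal entries read $\sum_i y_{2\alpha+2e_i}\le R^2 y_{2\alpha}$ (here $e_i$ is the $i$th unit multi-index). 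If $y_0=0$, positive semidefiniteness of $M_k(y)$ forces its zeroth row to vanish, and then this domination propagates $y_{2\alpha}=0$ from low to high $|\alpha|$; thus the whole diagonal of $M_k(y)$ is zero, whence $M_k(y)=0$ and $y=0$. Therefore $y_0>0$ on $\mathscr{M}_k(\g_X)\setminus\{0\}$ and $1\in\mathrm{int}\,\mathscr{P}_k(\g_X)$.

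With strict feasibility of (\ref{dsdp}) and its boundedness above in hand, I would invoke the strong duality theorem for conic programs (in the same spirit as \cite[Theorem 7.2]{barvinok} used for Lemma \ref{sd}, or \cite{lasserre}) to conclude $p^*_k=d^*_k$ and that the primal (\ref{psdp}) attains its infimum. It then remains to prove attainment in (\ref{dsdp}). For this I would show its feasible set is compact: it is closed, as the preimage of the closed cone $\mathscr{P}_k(\g_X)\times\mathscr{P}_k(\g_X)$ under a linear map, and it is bounded because $1\pm\a^\top(x)\,\u\in\mathscr{P}_k(\g_X)\subset\mathscr{P}(X)$ forces $\|\a^\top(x)\,\u\|_\infty\le1$ on $X$, placing $\u$ in the bounded set $U$ of Lemma \ref{sa}. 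A continuous linear objective over a nonempty compact set attains its maximum, so (\ref{dsdp}) is solvable as well.

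I expect the strict feasibility step to be the main obstacle, because the naive candidate fails for the pure sum-of-squares cone ($1\notin\mathrm{int}\,\Sigma[x]_k$, as $1$ has no positive definite Gram representation); the point is that the quadratic module $\mathscr{P}_k(\g_X)$ is genuinely larger, and the compactness certificate is precisely what makes $1$ an interior point even when $X$ has empty interior (as for the $2$-sphere). Everything else reduces to routine bookkeeping and standard conic duality.
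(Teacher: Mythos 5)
Your proposal is correct in substance, but it reaches the conclusion by a genuinely different mechanism than the paper. The paper never invokes a Slater condition: it first proves attainment in (\ref{psdp}) directly, by showing that the ball constraint forces $\vert y_{\pm\alpha}\vert\leq M^k y_{\pm 0}$ for all $\vert\alpha\vert\leq 2k$ (via \cite[Proposition 3.6]{lasserre}), so that a minimizing sequence is bounded and a limit point is feasible and optimal; it then deduces $p^*_k=d^*_k$ from compactness of the set of optimal solutions of (\ref{psdp}) together with the closedness criterion of \cite[Chapter IV, Theorem 7.2]{barvinok}; finally it proves attainment in (\ref{dsdp}) exactly as you do, by compactness of the set $\mathcal{U}$ (proof of Lemma \ref{sa}) and closedness of $\mathscr{P}_k(\g_X)$ (Lemma \ref{closed} of the Appendix). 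Your route replaces the first two steps by strict feasibility of (\ref{dsdp}) at $\u=0$ plus the textbook conic duality theorem. Note that your key computation --- $y_0=0$ and $y\in\mathscr{M}_k(\g_X)$ imply $y=0$, by propagating the diagonal domination $\sum_i y_{2\alpha+2e_i}\leq R^2\,y_{2\alpha}$ coming from the localizing matrix of the ball constraint --- is the qualitative core of the very same moment bound the paper uses quantitatively; both proofs exploit the compactness certificate in the same way, but you package it as an interior point ($1\in\mathrm{int}\,\mathscr{P}_k(\g_X)$) while the paper packages it as boundedness of minimizing sequences. Your packaging is more modular: strict dual feasibility plus boundedness yields zero gap and primal attainment in one stroke, and your observation that the ball multiplier makes $1$ interior to the quadratic module even though $1\notin\mathrm{int}\,\Sigma[x]_k$, and even when $X$ has empty interior (the sphere example), is a nice point the paper does not make explicit.

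Two technical points must be repaired for your argument to be complete. First, you assert twice that $\mathscr{P}_k(\g_X)$ is closed (in the interior characterization, and when you describe the feasible set of (\ref{dsdp}) as the preimage of ``the closed cone'' $\mathscr{P}_k(\g_X)\times\mathscr{P}_k(\g_X)$). This is not obvious: truncated quadratic modules need not be closed in general, and the paper devotes its entire Appendix (Lemma \ref{closed}) to this fact. For the Slater step you can avoid it: since $\mathscr{P}_k(\g_X)^*=\mathscr{M}_k(\g_X)$, one has $\mathscr{M}_k(\g_X)^*=\overline{\mathscr{P}_k(\g_X)}$, and $\mathrm{int}\,\overline{C}=\mathrm{int}\,C$ for any full-dimensional convex cone $C$, so your positivity criterion still places $1$ in $\mathrm{int}\,\mathscr{P}_k(\g_X)$; the same remark (or a line-segment argument through the Slater point) also identifies the optimal value of (\ref{dsdp}) with that of its closure-version, which is what the conic duality theorem formally pairs with (\ref{psdp}). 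However, the closedness of the feasible set of (\ref{dsdp}) in your final attainment step genuinely requires Lemma \ref{closed}, so you should cite it there rather than treat it as routine. Second, drop ``pointed'' from your characterization: $\mathscr{P}_k(\g_X)$ need not be pointed (on the sphere both $\pm(R^2-\Vert x\Vert^2)$ belong to it), but the implication you actually use --- positivity of $\langle 1,y\rangle$ on $\mathscr{M}_k(\g_X)\setminus\{0\}$ implies interiority --- does not require pointedness, only full-dimensionality.
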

\begin{proof}
{We first show that (\ref{psdp}) has an optimal solution.
Recall that one of constraints $g_j(x)\geq0$ that define $X$ states that $M-\Vert x\Vert^2\geq0$ for some $M>1$.
From the constraint $M_{k-k_j}(g_j y_+)\succeq0$ one deduces that $\ell_{y_+}(M-x_i^2)\geq0$, and
$\ell_{y_+}(Mx_i^t -x_i^{t+2})\geq0$ for every $t=1,\ldots,2k-2$. Hence 
$\ell_{y^+}(x_i^{2k})\leq M^{k}y_{+0}$ for every $i=1,\ldots,n$.
With similar arguments, $\ell_{y^-}(x_i^{2k})\leq M^{k}y_{-0}$ for every $i=1,\ldots,n$.
By \cite[Proposition 3.6]{lasserre} $\vert y_{+\alpha}\vert \leq M^ky_{+0}$ and
$\vert y_{-\alpha}\vert \leq M^ky_{-0}$ for all $\alpha\in\mathbb{N}^n_{2k}$. 
Next, in a minimizing sequence $(y^s_+,y^s_-)$, $s\in\mathbb{N}$, of (\ref{psdp}) one has 
$y^s_{+0}+y^s_{-0}\leq y^1_{+0}+y^1_{-0}=:\rho$ for all $s$, and so
$\vert y^s_{+\alpha}\vert \leq M^k\rho$ and $\vert y^s_{-\alpha}\vert \leq M^k\rho$ for all $\alpha\in\mathbb{N}^n_{2k}$, and all $s=1,\ldots$. 
From this we deduce that there is a subsequence $(y^{s_t}_+,y^{s_t}_-)$, $t\in\mathbb{N}$, that converges to
some $(y^*_+,y^*_-)$ as $t\to\infty$, with value $y^*_{+0}+y^*_{-0}=p^*_k$. In addition by a simple continuity argument,
$M_k(y^*_+)\succeq0$ and $M_{k-k_j}(g_j \,y^*_+)\succeq0$, $j=1,\ldots,n_X$. Similarly
$M_k(y^*_-)\succeq0$ and $M_{k-k_j}(g_j\, y^*_-)\succeq0$, $j=1,\ldots,n_X$, which proves that
$(y^*_+,y^*_-)$ is an optimal solution of (\ref{psdp}). 

Next, the set of optimal solutions $y^*:=\{(y^*_+,y^*_-)\}$ of (\ref{psdp}) is compact. This follows from 
$\vert y^*_{+\alpha}\vert \leq M^ky^*_{+0}\leq M^kp^*_k$ and
$\vert y^*_{-\alpha}\vert \leq M^ky^*_{-0}\leq M^kp^*_k$ for all $\alpha\in\mathbb{N}^n_{2k}$. And so
every sequence in $y^*$ has a converging subsequence. From \cite[Chapter IV. Theorem 7.2]{barvinok}
one also deduces that there is no duality gap
between (\ref{psdp}) and (\ref{dsdp}). 

It remains to prove that (\ref{dsdp}) has an optimal solution. Consider a maximizing sequence 
$(\mathbf{u}_t)_{t\in\mathbb{N}}$, with $\mathbf{b}^T\mathbf{u}_t\to p^*_k=p^*$ as $t\to\infty$.
By feasibility in (\ref{dsdp}), one has $\Vert \mathbf{a}(x)^T\mathbf{u}_t\Vert_\infty\leq 1$ for all $t$ and therefore
$(\mathbf{u}_t)\subset \mathcal{U}:=\{\mathbf{u}\in\mathbb{R}^n:\Vert \mathbf{a}(x)^T\mathbf{u}\Vert_\infty\leq 1\}$ 
and $\mathcal{U}$ is compact (see the proof of Lemma \ref{sa}).
Therefore there exists $\mathbf{u}^*\in\mathcal{U}$ and a subsequence $(t_\ell)_{\ell \in \mathbb N}$
such that  $\mathbf{u}_{t_\ell}\to \mathbf{u}^*\in\mathcal{U}$ as $\ell\to\infty$.
In particular $\mathbf{b}^T\mathbf{u}^*=p^*$. Moreover, since by Lemma \ref{closed} in the Appendix the convex cone $\mathscr{P}_k(\mathbf{g}_X)$ is closed,
$1-\mathbf{a}(x)^T\mathbf{u}_{t_\ell}\to1-\mathbf{a}(x)^T\mathbf{u}^*\in\mathscr{P}_k(\mathbf{g}_X)$, which proves that
$\mathbf{u}^*$ is an optimal solution of (\ref{dsdp}).
}
\end{proof}

Assume that the rank conditions of Theorem \ref{rank} is satisfied
at some relaxation order $k$, and let $(\mu^*_+,\mu^*_-)$ denote the atomic
measures optimal for problem (\ref{p}), obtained from the solution
of the primal SDP problem (\ref{psdp}). Let $\mathbf{u}^*$ denote an optimal solution
of the dual SDP problem (\ref{dsdp}). The duality result of Lemma \ref{duality}
implies that
\[
{\mbox{Supp}}\:\mu^*_+ \subset \{x \in X : \a^\top(x)\mathbf{u}^* = 1\}
\]
and
\[
{\mbox{Supp}}\:\mu^*_- \subset \{x \in X : \a^\top(x)\mathbf{u}^* = -1\}
\]
so that the polynomial $\a^\top(x)\mathbf{u}^*$ can be used as a certificate
of optimality.
We formulate this in the following dual to Theorem \ref{rank}.

\begin{lemma}
Assume that the rank conditions (\ref{rank-cond}) of Theorem \ref{rank} hold.
Let us denote by $\u^*$ the optimal solution of SOS SDP (\ref{dsdp}). Then
the polynomial $z^*_+(x):=1+\a^\top(x)\u^*$ vanishes at the $r_+$ points
of the support of $\mu^*_+$, and the polynomial $z^*_-(x):=1-\a^\top(x)\u^*$ vanishes
at the $r_-$ points of the support of $\mu^*_-$.
\end{lemma}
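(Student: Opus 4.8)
The plan is to obtain the stated vanishing directly from complementary slackness, once $\u^*$ is certified to be optimal for the infinite-dimensional dual (\ref{d}) and not merely for its finite strengthening (\ref{dsdp}).

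First I would promote $\u^*$ from an optimizer of (\ref{dsdp}) to an optimizer of (\ref{d}). Under the rank conditions (\ref{rank-cond}), Theorem \ref{rank} gives $p^*_k = p^*$, while Lemma \ref{duality} gives $d^*_k = p^*_k$; chaining these with the no-gap identity of Lemma \ref{sd} yields $\b^\top\u^* = d^*_k = p^* = d^*$. Since the SOS membership $z^*_+, z^*_- \in \mathscr{P}_k(\g_X) \subset \mathscr{P}(X)$ makes $\u^*$ feasible for (\ref{d}), and it attains the value $d^*$, it is optimal for (\ref{d}). Hence $(\mu^*,\u^*)$ is an optimal primal--dual pair for (\ref{p})--(\ref{d}), and the complementarity relations derived after Lemma \ref{sa} apply to it.

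Next I would invoke those complementarity conditions, namely $\langle z^*_+, \mu^*_+\rangle = 0$ and $\langle z^*_-, \mu^*_-\rangle = 0$. Because $z^*_+$ is continuous and nonnegative on $X$ by dual feasibility, and $\mu^*_+$ is a nonnegative measure, the vanishing of $\langle z^*_+, \mu^*_+\rangle = \int_X z^*_+ \, d\mu^*_+$ forces $z^*_+$ to vanish $\mu^*_+$-almost everywhere; by continuity of $z^*_+$ this upgrades to vanishing on all of $\mathrm{Supp}\,\mu^*_+$, the smallest closed set of full $\mu^*_+$-measure. The identical argument applied to $z^*_-$ and $\mu^*_-$ shows $z^*_-$ vanishes on $\mathrm{Supp}\,\mu^*_-$.

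Finally I would specialize to the atomic setting: under (\ref{rank-cond}), Theorem \ref{rank} guarantees that $\mu^*_+$ and $\mu^*_-$ are atomic, supported on exactly $r_+ = \mathrm{rank}\,M_k(y^*_+)$ and $r_- = \mathrm{rank}\,M_k(y^*_-)$ points respectively. Restricting the vanishing just obtained to these finitely many atoms delivers the claim. The only genuinely delicate step is the first one---certifying optimality of $\u^*$ for the continuous-function LP (\ref{d}), which is what licenses the use of complementary slackness at the infinite-dimensional level---since everything afterwards reduces to the elementary fact that a nonnegative continuous integrand with zero integral against a positive measure must vanish on that measure's support.
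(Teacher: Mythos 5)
Your proof is correct and reaches the stated conclusion, but it travels a genuinely different route from the paper's. The paper stays at the finite-dimensional level: it invokes complementary slackness for the primal--dual SDP pair (\ref{psdp})--(\ref{dsdp}) (whose zero gap and attainment is Lemma \ref{duality}), obtaining the vanishing of the pairing of $z^*_+$ with the optimal measure directly, because under the rank conditions the optimal moment vector $y^*_+$ is represented by the atomic measure $\mu^*_+$; pairing the polynomial against each atom $\delta_{x^k_+}$ then gives the vanishing at the $r_+$ points in one stroke, and similarly for $z^*_-$. You instead promote $\u^*$ to an optimizer of the infinite-dimensional LP (\ref{d}) through the chain $\b^\top\u^*=d^*_k=p^*_k=p^*=d^*$ (Lemma \ref{duality}, Theorem \ref{rank}, Lemma \ref{sd}), together with the inclusion $\mathscr{P}_k(\g_X)\subset\mathscr{P}(X)$ for dual feasibility, and then apply measure-level complementary slackness plus the elementary fact that a continuous nonnegative function whose integral against a nonnegative measure vanishes must vanish on that measure's support. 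Your detour is longer but buys more: it certifies that $\u^*$ solves the original dual (\ref{d}) and not merely its strengthening (\ref{dsdp}), and it yields vanishing on the full supports without first extracting atoms; the paper's argument is shorter and never needs the infinite-dimensional no-gap result of Lemma \ref{sd}. Note also that your promotion step uses the rank conditions essentially (through $p^*_k=p^*$ from Theorem \ref{rank}), exactly as the hypothesis requires.

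One caveat, inherited from the paper rather than introduced by you: expanding $p^*-d^*$ at an optimal pair gives $\langle z_-,\mu_+\rangle+\langle z_+,\mu_-\rangle=0$, not $\langle z_+,\mu_+\rangle+\langle z_-,\mu_-\rangle=0$, and consistently with this the paper's own support inclusions place $\mathrm{spt}\,\mu_+$ inside $\{x\in X : \a^\top(x)\u=1\}$, where $z_+=2$. So the correct pairing is that $z^*_-$ vanishes on the support of $\mu^*_+$ and $z^*_+$ on the support of $\mu^*_-$; a one-point check on $X=[-1,1]$ with $a_1(x)=x$, $b_1=1$, optimal $\mu=\delta_1$ and $\u^*=1$ confirms $z^*_-(1)=0$ while $z^*_+(1)=2$. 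The lemma statement and the complementarity relations you quote carry the same index swap, so your argument proves the lemma exactly as stated once the swap is corrected uniformly throughout; this is a sign typo in the source, not a gap in your reasoning.
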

\begin{proof}
Let us denote by $\{x^k_+\}_{k=1,\ldots,r_+} \subset X$ the points of the support of
the optimal measure $\mu^*_+$, computed from the moments $y^*_+$ solving
optimally moment SDP (\ref{psdp}).
By complementarity of the solutions of primal-dual SDP (\ref{psdp}) and (\ref{dsdp}),
it holds $\langle z^*_+,\mu^*_- \rangle = 0$ and hence
$\langle z^*_+,\delta_{x^k_+} \rangle = z^*_+(x^k_+) = 0$ for each $k=1,\ldots,r_+$.
The proof is similar for $z^*_-$ and $\mu^*_-$.
\end{proof}

\section{Discussion}

We would like to point out that the developments in this paper were
inspired by a previous work on optimal control for linear systems
formulated as a primal LP \eqref{p} on measures and a dual LP
on continuous functions (\ref{d}), and solved numerically with primal-dual
moment-SOS SDP hierarchies \cite{claeys13,claeys14}. Formulating optimal control
problems as moment problems was a classical research topic in the 1960s,
where optimal control laws were sought in measures spaces (completions
of Lebesgue spaces) to allow for oscillations and concentrations, see e.g.
\cite{krasovski68} or the overview in \cite[Section III]{fattorini99}. In the case
of linear optimal control of an ordinary differential equation of order $n$,
it was proved in \cite{neustadt64} that there is always an $n$-atomic
optimal measure solving problem (\ref{p}).

In practice, Theorem \ref{rank} should be used as follows:
\begin{itemize}
\item[1] Let $k = \max\{d,k_X\}$.
\item[2] Solve SDP problem (\ref{psdp}) and its dual (\ref{dsdp}) with a primal-dual algorithm.
\item[3] If the rank condition (\ref{rank-cond}) of Theorem \ref{rank} is satisfied, then extract
the measure from the solution of (\ref{psdp}) and the polynomial certificate
from the solution of (\ref{dsdp}). Otherwise, let $k=k+1$, and go to 1.
\end{itemize}

We conjecture that if the data $\a, \b$ in problem (\ref{p})
are generic, then there is a finite value of $k$ for which the
rank condition of Theorem \ref{rank} is satisfied. The rationale behind this assertion follows from a result by Nie \cite{nie}
on {\it generic} finite convergence for the moment-SOS SDP hierarchy for polynomial optimization over 
compact basic semi-algebraic sets. Translated in the present context for a fixed family of  data
$\mathbf{a}$, results in \cite{nie} yield that
there is a set of polynomials $\{h_1,\ldots,h_L\}\subset \mathbb{R}[\mathbf{u}]$, such that,
given a feasible solution $\mathbf{u}$ of (\ref{d}),
if $h_\ell(\mathbf{u})\neq0$ for all $\ell=1,\ldots,L$, then indeed
\[1+\mathbf{a}^T(x)\mathbf{u}\,=\,p^1_0(x)+\sum_{j=1}^{n_X}p^1_j(x)\,g_j(x),\qquad x\in\mathbb{R}^n,\]
and 
\[1-\mathbf{a}^T(x)\mathbf{u}\,=\,p^2_0(x)+\sum_{j=1}^{n_X}p^2_j(x)\,g_j(x),\qquad x\in\mathbb{R}^n,\]
for some SOS polynomials $p^k_j$, $k=1,2$ and $j=1,\ldots,n_{X}$. So if the optimal solution $\mathbf{u}^*$ of (\ref{d})
satisfies $h_\ell(\mathbf{u}^*)\neq0$, $\ell=1,\ldots,L$, then $d^*=d^*_k$ for some index $k$ (i.e. finite convergence takes place).
Similarly, by \cite{nie2} the rank-condition (\ref{rank-cond}) of Theorem \ref{rank} also holds generically for 
polynomial optimization (which however is a context different from the present context). Put differently,
finite convergence would not hold only if every optimal solution $\mathbf{u}$ of (\ref{d}) would be a zero of some polynomial 
of the family $\{h_1,\ldots,h_L\}\subset\mathbb{R}[\mathbf{u}]$.
But so far we have not proved that at least one optimal solution $\mathbf{u}^*$  of (\ref{d}) is not a zero of some
of the polynomials $h_\ell$, at least for generic $\mathbf{b}$.

Of course,  finite convergence occurs for trigonometric polynomials 
on $X=[0,2\pi]$, which follows from the Fej\'er-Riesz theorem and this was exploited in the 
landmark paper \cite{candes2012towards}. Similarly, but apparently not so well-known, 
the Fej\'er-Riesz theorem also holds in dimension $n=2$.  
Indeed it follows from Corollary 3.4 in \cite{scheiderer2006sums} that every non-negative bivariate trigonometric polynomial can be written as a sum of squares of trigonometric polynomials\footnote{We are grateful to Markus Schweighofer for providing this reference.}. 
So again for trigonometric polynomials on $X=[0,2\pi]^2$, finite convergence of the hierarchy (\ref{dsdp}) takes place, i.e., $d^*_k=d^*$.
Note however that in contrast to the one-dimensional case,
there is {\it no} explicit upper bound on the degrees of the sum of squares which are required, so that
even in the two-dimensional Fourier case we do not have an a priori estimates on the smallest value of $k$
for which $d^*_k=d^*$ an for which we can guarantee that the rank condition of Theorem \ref{rank} is satisfied.

{Generally speaking}, even if {our genericity} conjecture is true,
we do not have a priori estimates on the smallest value of $k$
for which Theorem \ref{rank} holds. As mentioned above,
this also true even in the two-dimensional case on $[0,2\pi]^2$ where finite convergence is guaranteed
in all  cases.

\section{Appendix}
We first recall some standard results of convex analysis.
\begin{lemma}(\cite[Corollary I.1.3]{faraut})
Let $C\subset\mathbb{R}^n$ be a closed convex cone with dual $C^*=\{y: \langle x,y\rangle \geq0,\:\forall x\in C\}$.
Then ${\rm int}\:C^*\neq\emptyset\Leftrightarrow C\cap (-C)=\{0\}$.
\end{lemma}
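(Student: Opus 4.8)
The plan is to reduce the equivalence to the bipolar theorem for closed convex cones. The first observation is purely convex-geometric: since $C^*$ is a convex cone containing the origin, its affine hull coincides with its linear span, and a nonempty convex set in $\mathbb{R}^n$ has nonempty interior if and only if its affine hull is all of $\mathbb{R}^n$. Hence $\mathrm{int}\,C^*\neq\emptyset$ is equivalent to $\mathrm{span}(C^*)=\mathbb{R}^n$, i.e.\ to $(C^*)^\perp=\{0\}$, where $(C^*)^\perp:=\{x:\langle x,y\rangle=0\text{ for all }y\in C^*\}$ denotes the orthogonal complement of $\mathrm{span}(C^*)$.

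The second and decisive step is to identify $(C^*)^\perp$ with the lineality space $L:=C\cap(-C)$. A vector $x$ lies in $(C^*)^\perp$ precisely when $\langle x,y\rangle=0$ for all $y\in C^*$, which splits into $\langle x,y\rangle\geq0$ and $\langle -x,y\rangle\geq0$ for all $y\in C^*$; equivalently, both $x$ and $-x$ belong to the bidual $(C^*)^*$. Invoking the bipolar theorem $(C^*)^*=C$, which is valid here because $C$ is a closed convex cone, this says exactly that $x\in C$ and $-x\in C$, i.e.\ $x\in C\cap(-C)=L$. Thus $(C^*)^\perp=L$.

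Combining the two steps yields the claim: $\mathrm{int}\,C^*\neq\emptyset\Leftrightarrow(C^*)^\perp=\{0\}\Leftrightarrow L=\{0\}\Leftrightarrow C\cap(-C)=\{0\}$. I expect the only genuine content to be the bipolar identity $(C^*)^*=C$, whose proof rests on the separating hyperplane theorem applied to the closed convex set $C$; the remaining manipulations are elementary linear algebra together with the standard interior/affine-hull dictionary for convex sets. Since the statement is cited from \cite[Corollary I.1.3]{faraut}, one could alternatively defer entirely to that reference, but the short self-contained argument above is preferable, as the closedness and convexity hypotheses on $C$ are precisely what make the bipolar step legitimate.
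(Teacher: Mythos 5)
Your proof is correct, and it is worth noting that the paper itself offers no argument for this lemma at all: it is stated in the Appendix as a recalled ``standard result of convex analysis'' and justified solely by the citation to \cite[Corollary I.1.3]{faraut}. So your self-contained derivation is a genuine addition rather than a variant of anything in the paper. The two steps are sound: (i) since $0\in C^*$, the affine hull of $C^*$ is its linear span, and a nonempty convex subset of $\mathbb{R}^n$ has nonempty interior exactly when its affine hull is all of $\mathbb{R}^n$, i.e.\ when $(C^*)^\perp=\{0\}$; (ii) $x\perp C^*$ iff both $x$ and $-x$ lie in $(C^*)^*$, and the bipolar identity $(C^*)^*=C$ for the closed convex cone $C$ turns this into $x\in C\cap(-C)$. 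You also correctly isolate where the hypotheses enter: closedness and convexity of $C$ are used only in the bipolar step (via the separating hyperplane theorem), while everything else is finite-dimensional linear algebra plus the standard relative-interior facts for convex sets. The one thing your write-up leans on implicitly is that a nonempty convex set in $\mathbb{R}^n$ has nonempty \emph{relative} interior, which is what makes the interior/affine-hull dictionary valid; that is standard, but if you want the argument fully self-contained you should cite or prove it alongside the bipolar theorem.
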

\begin{lemma}(\cite[Corollary I.1.6]{faraut})
\label{aux1}
Let $C\subset\mathbb{R}^n$ be a closed convex cone whose dual $C^*$ has nonempty interior. Then 
for all $y\in{\rm int}\:C^*$, the set $\{x\in C: \langle x,y\rangle \leq 1\}$ is compact.
\end{lemma}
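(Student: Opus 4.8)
The plan is to show that the set
\[
K := \{x \in C \: :\: \langle x,y\rangle \leq 1\}
\]
is closed and bounded, so that compactness follows from the Heine--Borel theorem in the finite-dimensional space $\mathbb{R}^n$. Closedness is immediate and requires no work: $K$ is the intersection of the closed cone $C$ with the closed half-space $\{x : \langle x,y\rangle \leq 1\}$, hence is itself closed. The entire substance of the argument therefore lies in establishing boundedness.

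For boundedness I would argue by contradiction, reusing the normalization-and-subsequence device already employed in the proof of Lemma \ref{sa}. Suppose $K$ were unbounded; then there is a sequence $(x_s)_{s\in\mathbb{N}} \subset K$ with $\Vert x_s\Vert \to \infty$. Setting $v_s := x_s/\Vert x_s\Vert$, each $v_s$ lies in $C$ because $C$ is a cone, and $\Vert v_s\Vert = 1$. By compactness of the unit sphere a subsequence $v_{s_\ell}$ converges to some $v$ with $\Vert v\Vert = 1$, and since $C$ is closed we get $v \in C$. Dividing the constraint $\langle x_s,y\rangle \leq 1$ by $\Vert x_s\Vert$ yields $\langle v_s,y\rangle \leq \Vert x_s\Vert^{-1} \to 0$, so in the limit $\langle v,y\rangle \leq 0$; combined with $\langle v,y\rangle \geq 0$ (valid because $v \in C$ and $y \in C^*$) this forces $\langle v,y\rangle = 0$.

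The key step, which I expect to be the main obstacle, is to derive a contradiction from the existence of a \emph{nonzero} $v \in C$ on which $y$ vanishes, using the hypothesis $y \in \mathrm{int}\,C^*$. I would prove that an interior point of the dual cone is \emph{strictly} positive on $C\setminus\{0\}$. Since $y \in \mathrm{int}\,C^*$, there is $\varepsilon > 0$ with $B(y,\varepsilon) \subset C^*$, so the perturbed point $y' := y - \tfrac{\varepsilon}{2}\,v$ still lies in $C^*$ (because $\Vert v\Vert = 1$ gives $\Vert y'-y\Vert = \tfrac{\varepsilon}{2} < \varepsilon$). Testing $y'$ against $v \in C$ then gives
\[
\langle v,y'\rangle = \langle v,y\rangle - \tfrac{\varepsilon}{2}\langle v,v\rangle = -\tfrac{\varepsilon}{2} < 0,
\]
which contradicts $y' \in C^*$ together with $v \in C$. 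Hence no such $v$ can exist, $K$ is bounded, and compactness follows. As an alternative route to the same contradiction, one may invoke the preceding lemma (Corollary I.1.3 of \cite{faraut}): nonemptiness of $\mathrm{int}\,C^*$ yields $C \cap (-C) = \{0\}$, so $C$ is pointed, and this pointedness combined with the strict-positivity observation again rules out a limiting direction $v$ annihilating $y$. The perturbation argument establishing strict positivity is the crux; the surrounding compactness and normalization steps are routine.
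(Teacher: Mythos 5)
Your proof is correct. Note, however, that the paper itself does not prove this lemma: it is quoted directly from Faraut--Kor\'anyi \cite[Corollary I.1.6]{faraut} and used as a black box in the proof of Lemma \ref{closed}, so there is no in-paper argument to compare against. Your argument is the standard one and fills this gap in an elementary, self-contained way: closedness is immediate, and boundedness follows from the normalization--subsequence device (the same one the paper uses to prove Lemma \ref{sa}) combined with the key fact that any $y\in{\rm int}\:C^*$ is strictly positive on $C\setminus\{0\}$, which your perturbation $y'=y-\tfrac{\varepsilon}{2}\,v$ establishes cleanly: $y'\in C^*$ yet $\langle v,y'\rangle=-\tfrac{\varepsilon}{2}<0$, a contradiction. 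One small remark: your closing ``alternative route'' via Corollary I.1.3 is not genuinely an alternative --- pointedness $C\cap(-C)=\{0\}$ by itself does not exclude a nonzero $v\in C$ with $\langle v,y\rangle=0$ (a pointed cone can easily have boundary directions orthogonal to a given dual vector); it is the strict-positivity step that does all the work, and once you have that step the pointedness is superfluous. The main argument, though, is complete and correct.
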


\begin{lemma}
\label{closed}
The convex cone $\mathscr{P}_k(\mathbf{g}_X)$ is closed.
\end{lemma}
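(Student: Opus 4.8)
The plan is to prove that the quadratic module $\mathscr{P}_k(\mathbf{g}_X)$, viewed as a subset of the finite-dimensional vector space $\mathbb{R}[x]_{2k}$, is a closed convex cone. Convexity is immediate from the definition, since sums of squares and products with the fixed $g_j$ are preserved under nonnegative combinations. The crux is closedness, and the standard obstacle here is that a naive sum of cones need not be closed even when each summand is. The key structural fact I would exploit is the compactness certificate: one of the constraints defining $X$ is $R^2 - \sum_i x_i^2 \geq 0$, which forces the associated quadratic module to be \emph{Archimedean}. This is exactly the hypothesis that rescues closedness.

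First I would set up the representation. Recall that $\mathscr{P}_k(\mathbf{g}_X) = \{p_0 + \sum_{j=1}^{n_X} g_j p_j : p_j \in \Sigma[x]_{k-k_j}\}$. Each cone $\Sigma[x]_{k-k_j}$ of sums of squares of bounded degree is a closed convex cone in the corresponding finite-dimensional polynomial space (it is the image of the PSD cone under the linear Gram map, and the PSD cone is closed; more carefully, $\Sigma[x]_\ell$ is closed because it is a closed cone being the set of $p$ such that the Gram problem $p = v^\top Q v$ admits a solution $Q \succeq 0$, and one checks closedness via a limiting Gram-matrix argument). So $\mathscr{P}_k(\mathbf{g}_X)$ is a finite Minkowski-type sum of continuous linear images of closed cones, and the whole difficulty collapses to showing this particular sum is closed.

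The main step, and the one I expect to be the real obstacle, is to rule out the pathology where a sequence $q^{(s)} = p_0^{(s)} + \sum_j g_j p_j^{(s)} \to q$ converges while the individual summands $p_j^{(s)}$ blow up in a way that cancels in the limit. To control this I would use the Archimedean property. The plan is to show that on the set of representations, one can bound the coefficients of the SOS multipliers in terms of the norm of $q^{(s)}$. Concretely, I would argue that if the total size $\sum_j \|p_j^{(s)}\|$ were unbounded, then after rescaling by this size and passing to a convergent subsequence (using closedness of each $\Sigma[x]_{k-k_j}$ and finite dimensionality), one obtains a nontrivial relation $0 = \tilde p_0 + \sum_j g_j \tilde p_j$ with $\tilde p_j \in \Sigma[x]_{k-k_j}$ not all zero. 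Evaluating this identity at any point $x \in X$ gives $\tilde p_0(x) + \sum_j g_j(x)\tilde p_j(x) = 0$ with every term nonnegative, forcing $\tilde p_0 \equiv 0$ on $X$ and each $g_j \tilde p_j \equiv 0$ on $X$; combined with the Archimedean bound coming from $R^2 - \sum_i x_i^2$, this yields enough rigidity to contradict nontriviality.

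Once the multiplier coefficients are bounded, the argument finishes routinely: from a convergent sequence $q^{(s)} \to q$ with uniformly bounded representations, I extract a convergent subsequence of each $p_j^{(s)} \to p_j^\star$ by Bolzano--Weierstrass in finite dimensions, use closedness of $\Sigma[x]_{k-k_j}$ to conclude $p_j^\star \in \Sigma[x]_{k-k_j}$, and pass to the limit in the linear identity to obtain $q = p_0^\star + \sum_j g_j p_j^\star \in \mathscr{P}_k(\mathbf{g}_X)$. This exhibits $q$ as a member of the cone and establishes closedness. I would cite the standard treatment of closedness of truncated quadratic modules under an Archimedean condition (as in the monograph \cite{lasserre}) to streamline the boundedness estimate, since that is precisely where the compactness certificate on $X$ does its work.
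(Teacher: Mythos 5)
Your overall strategy---rescale a putatively unbounded sequence of multipliers, pass to a limit using closedness of each $\Sigma[x]_{k-k_j}$ and finite dimensionality, and obtain a nontrivial identity $0=\tilde p_0+\sum_{j}g_j\tilde p_j$ with SOS $\tilde p_j$ not all zero---is a legitimate route, and it is genuinely different from the paper's. But the crucial final step, deriving a contradiction from that identity, is exactly where your argument has a real gap, and the hypothesis you invoke there is the wrong one. The Archimedean property furnished by $R^2-\sum_i x_i^2$ does \emph{not} rule out such nontrivial relations, and ``enough rigidity'' is never made precise. Concretely, take the paper's own sphere example: any basic semi-algebraic presentation of the $2$-sphere must use two opposite inequalities, say $g_1(x):=1-\Vert x\Vert^2\geq 0$ and $g_2(x):=\Vert x\Vert^2-1\geq 0$. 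This presentation contains a ball constraint, hence is Archimedean, and yet $g_1\cdot 1+g_2\cdot 1=0$ is a nontrivial relation with SOS (constant) multipliers: every term is nonnegative on $X$, every term vanishes on $X$, and no contradiction can be extracted---your rescaled limit could literally be this relation. So your proof cannot be completed whenever $X$ has empty interior, and no appeal to Archimedeanness can repair it; I also know of no ``standard treatment'' in \cite{lasserre} proving closedness of truncated quadratic modules from an Archimedean condition alone.

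What actually makes your contradiction step work is the assumption that $X$ has \emph{nonempty interior}: the identity $0=\tilde p_0+\sum_j g_j\tilde p_j$ holds on all of $\mathbb{R}^n$, nonnegativity of each term on $X$ forces each term to vanish on $X$, and a polynomial vanishing on a set with nonempty interior vanishes identically; since each $g_j\not\equiv 0$, this gives $\tilde p_j\equiv 0$ for all $j$, contradicting your normalization. With that substitution your proof closes, and it then stands as a clean alternative to the paper's argument, which instead bounds the Gram matrices directly: the paper pairs the representation of $q_t$ with the moment vector $y$ of Lebesgue measure on $X$, notes that $M_k(y)\succ 0$ and $M_{k-k_j}(g_j\,y)\succ 0$ \emph{precisely because $X$ has nonempty interior}, and invokes the Faraut--Kor\'anyi lemma that $\{Q\succeq 0:\langle Q,Y\rangle\leq\eta\}$ is compact when $Y\succ 0$. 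Note that the paper's proof thus rests on the same nonempty-interior hypothesis (stated only inside the proof, not among the standing assumptions), so the lemma is delicate for presentations like the sphere under either approach; the difference is that the paper identifies the hypothesis that does the work, whereas your write-up attributes the closedness to Archimedeanness, which is insufficient.
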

\begin{proof}
Let $\mathcal{S}^n_+$ be the convex cone of real symmetric matrices of size $n$ that are positive semidefinite.
{Let $\mathbb{N}^n_k$ be the set
of $n$-dimensional integer vectors $\alpha$ such that $\sum_{i=1}^n \alpha_i  \leq k$} and let
$v_k(x):=(x^\alpha)_{\alpha\in\mathbb{N}^n_k}$ be a vector of monomials {of degree up to $k$.}
Next let $v_k(x)\,v_k(x)^T=\sum_{\alpha\in\mathbb{N}^n_{2k}}x^\alpha\,A_{0\alpha}$ and 
\[v_{k-v_j}(x)\,v_{k-v_j}(x)^T\,g_j(x)\,=\,\sum_{\alpha\in\mathbb{N}^n_{2k}}x^\alpha\,A_{j\alpha},\quad j=1,\ldots,n_X,\]
for some appropriate real symmetric matrices $A_{j\alpha}$.

Consider a sequence
$(q_t)_{t\in\mathbb N} \subset\mathscr{P}_k(\mathbf{g}_X)$ such that $q_t\to q\in \mathbb{R}[x]_{2k}$ as $t\to\infty$. That is
\[q_t(x)\,=\,p_{0t}(x)+\sum_{j=1}^{n_X}p_{jt}(x)\,g_j(x),\qquad\forall x\in\mathbb{R}^n,\]
for some $p_{jt}\in\Sigma[x]_{k-v_j}$, $j=0,\ldots,n_X$, for all $t \in \mathbb N$. More precisely, coefficient-wise
\begin{equation}
\label{aux44}
q_{t\alpha}\,=\,\langle Q_{0t},A_{0\alpha}\rangle +
\sum_{j=1}^{n_X}\langle Q_{jt},A_{j\alpha}\rangle,\qquad\,\forall \, \alpha\in\mathbb{N}^n_{2k},\end{equation}
for some appropriate matrices $Q_{jt}\in\mathcal{S}^{k-v_j}_+$.
Let $y=(y_\alpha)_{\alpha\in\mathbb{N}^n_{2k}}$ be the moments $y_\alpha:=\int_X x^\alpha dx$ of the measure uniformly supported on $X$.
Observe that since $X$ has nonempty interior,
\[\int_X p(x)\,dx\,>\,0\qquad\forall 0\,\neq p\in\Sigma[x]_k,\]
and
\[\int_X p(x)\,g_j(x)\,dx\,>\,0\qquad\forall 0\,\neq p\in\Sigma[x]_{k-v_j}, \quad j=1,\ldots,n_X.\]
Put differently $M_k(y)\succ0$ and $M_{k-v_j}(g_j\,y)\succ0$, $j=1,\ldots,n_X$.

The convergence $q_t\to q$ implies $\langle q_t,y\rangle \to \langle q,y\rangle$ as $t\to\infty$. 
Hence there is some $\eta$ such that $\eta\geq\,\langle q_t,y\rangle$ for all $t \in \mathbb N$. This in turn implies
\begin{eqnarray}
\nonumber
\eta\,\geq\,\langle q_t,y\rangle&=&\langle p_{0t},y\rangle +\sum_{j=1}^{n_X}\langle p_{jt}\,g_j,y\rangle\\
\label{aux33}
&=&\langle Q_{0t}, M_k(y)\rangle +\sum_{j=1}^{n_X}
\langle Q_{jt}, M_{k-v_j}(g_j \,y)\rangle.
\end{eqnarray}
Therefore
\[\sup_t\,\langle Q_{0t}, M_k(y)\rangle \leq\,\eta,\quad 
\sup_t \langle Q_{jt}, M_{k-v_j}(g_j \,y)\rangle\,\leq\, \eta,\quad j=1,\ldots,n_X.\]
As $0\prec M_{k}(y)\,\in\,{\rm int}(\mathcal{S}^{k}_+)^*$, and
$0\prec M_{k-v_j}(g_j \,y)\,\in\,{\rm int}(\mathcal{S}^{k-v_j}_+)^*$, $j=1,\ldots,n_X$,
one may invoke Lemma \ref{aux1} and conclude that the sequences
$(Q_{0t})_{t\in\mathbb N}\subset\mathcal{S}^k_+$ and
$(Q_{jt})_{t\in\mathbb N}\subset\mathcal{S}^{k-v_j}_+$ are norm-bounded. Therefore there is a subsequence 
$(t_\ell)_{\ell\in\mathbb{N}}$ and matrices $Q_0\in\mathcal{S}^k_+$
and $Q_j\in\mathcal{S}^{k-v_j}_+$, $j=1,\ldots,n_X$, such that
\[Q_{0t_\ell}\to Q_0,\quad Q_{jt_\ell}\to Q_j,\quad j\,=1,\ldots,n_X\]
as $\ell\to\infty$.
Taking the limit for the subsequences $(q_{t_\ell\alpha})_{\ell\in\mathbb N}$ and 
$(Q_{jt_\ell})_{\ell\in\mathbb N}$ in (\ref{aux44}) yields coefficient-wise
\[q_{\alpha}\,=\,\langle Q_0,A_{0\alpha}\rangle +
\sum_{j=1}^{n_X}\langle Q_j,A_{j\alpha}\rangle,\qquad\,\forall \, \alpha\in\mathbb{N}^n_{2k},\]
which proves that $q\in\mathscr{P}_k(\mathbf{g}_X)$, the desired result.
\end{proof}

\end{document}